\DeclareMathOperator*{\E}{\mathbb{E}}
\newcommand{\polym}{\mathcal{P}}
\newcommand{\norm}[1]{\left\lVert#1\right\rVert}
\newcommand{\oracle}[2]{\mathcal{A}^{#1}_{\text{#2}}}
\newtheorem{theorem}{Theorem}[section]
\newtheorem{lemma}[theorem]{Lemma}
\newtheorem{conjecture}[theorem]{Conjecture}
\newcommand{\Rational}{\mathrm{Rational}}
\newcommand{\OPT}{\mathrm{OPT}}
\newcommand{\Maximin}{\mathrm{Maximin}}
\newcommand{\Infl}{\mathcal{I}}
\newcommand{\PoF}{PoF}
\DeclareMathOperator*{\argmax}{argmax} 
\newcommand{\euler}{\mathrm{e}}
\DeclarePairedDelimiter{\ceil}{\lceil}{\rceil}
\newcommand\RegTextCaption[1]{%
  \captionsetup{font=normalsize}%
  \caption*{#1}}
\newcommand{\citet}[1]{\citeauthor{#1} \shortcite{#1}}
\title{Group-Fairness in Influence Maximization}
\author{
Alan Tsang$^{1*}$
\and
Bryan Wilder$^{2*}$\and
Eric Rice$^{2}$\And
Milind Tambe$^2$\and
Yair Zick$^1$
\affiliations
$^*$Equal contribution\\
$^1$Department of Computer Science, National University of Singapore\\
$^2$Center for AI in Society, University of Southern California\\
\emails
akhtsang@nus.edu.sg,
\{bwilder, ericr, tambe\}@usc.edu, 
zick@comp.nus.edu.sg
}
\begin{document}
\maketitle

\begin{abstract}
Influence maximization is a widely used model for information dissemination in social networks.
Recent work has employed such interventions across a wide range of social problems, spanning public health, substance abuse, and international development (to name a few examples). A critical but understudied question is whether the benefits of such interventions are {\em fairly distributed} across different groups in the population; e.g., avoiding discrimination with respect to sensitive attributes such as race or gender. Drawing on legal and game-theoretic concepts, we introduce formal definitions of fairness in influence maximization. 
We provide an algorithmic framework to find solutions which satisfy fairness constraints, and in the process improve the state of the art for general multi-objective submodular maximization problems. Experimental results on real data from an HIV prevention intervention for homeless youth show that standard influence maximization techniques oftentimes neglect smaller groups which contribute less to overall utility, resulting in a disparity which our proposed algorithms substantially reduce. 
\end{abstract}

\section{Introduction}
Influence maximization in social networks is a well-studied problem with applications in a broad range of domains. Consider, for example, a group of at-risk youth; outreach programs try to provide as many people as possible with useful information (e.g., HIV safety,  or available health services). Since resources (e.g., social workers) are limited, it is not possible to personally reach every at-risk individual. It is thus important to target {\em key community figures} who are likely to spread vital information to others. Formally, individuals are nodes $V$ in a social network, and we would like to influence or \emph{activate} as many of them as possible. This can be done by initially {\em seeding} $k$ nodes (where $k\ll |V|$). The seed nodes activate their neighbors with some probability, who activate their neighbors and so forth. Our goal is to identify $k$ seeds such that the maximal number of nodes is activated. This is the classic {\em influence maximization problem} \cite{kempe2003maximizing}, that has received much attention in the literature. 

In recent years, the influence maximization framework has seen application to many social problems, such as HIV prevention for homeless youth \cite{yadav2018bridging,wilder2018end}, public health awareness \cite{valente2007identifying}, financial inclusion \cite{banerjee2013diffusion}, and more. Frequently, small and marginalized groups within a larger community are those who benefit the most from attention and assistance. It is important, then, to ensure that the allocation of resources reflects and respects the diverse composition of our communities, and that each group receives a fair allocation of the community's resources. For instance, in the HIV prevention domain we may wish to ensure that members of racial minorities or of LGBTQ identity are not disproportionately excluded; this is where our work comes in. 

% \subsection{Our Contributions}\label{sec:contrib}
\paragraph{Our Contributions: }
This paper introduces the problem of fair resource allocation in influence maximization. Our \emph{first contribution} is to propose fairness concepts for influence maximization. We start with a \emph{maximin} concept inspired by the legal notion of disparate impact; formally it requires us to maximize the minimum fraction of nodes within each group that are influenced. While intuitive and well-motivated, this definition suffers from shortcomings that lead us to introduce a second concept, \emph{diversity constraints}. Roughly, diversity constraints guarantee that every group receives influence commensurate with its ``demand'', i.e., what it could have generated on its own, based on a number of seeds proportional to its size. Here, to compute a group's demand, we allow it a number of seeds proportional to its size, but require that it spreads influence using only nodes in the group. Hence, a small but well connected group may have a better claim for influence than a large but sparsely connected group. 

Our \emph{second contribution} is an algorithmic framework for finding solutions that satisfy either fairness concept. While the classical influence maximization problem is submodular (and hence easily solved with a greedy algorithm), fairness considerations produce strongly non-submodular objectives. This renders standard techniques inapplicable. We show that both fairness concepts can be reduced to \emph{multi-objective} submodular optimization problems, which are substantially more complex. Our key algorithmic contribution is a new method for general multi-objective submodular optimization which has substantially better approximation guarantee than the current best algorithm \cite{udwani2018multi}, and often better runtime as well. This result may be of independent interest. 

Our \emph{third contribution} is an analytical exploration of the \emph{price of group fairness} in influence maximization, i.e., the reduction in social welfare with respect to the unconstrained influence maximization problem due to imposing a fairness concept. We show that the price of diversity can be high in general for both concepts and under a range of settings. 

Our \emph{fourth contribution} is an empirical study on real-world social networks that have been used for a socially critical application: HIV prevention for homeless youth. Our results show that standard influence maximization techniques often cause substantial fairness violations by neglecting small groups. Our proposed algorithm substantially reduces such violations at relatively small cost to overall utility.

\paragraph{Related Work: }
\citet{kempe2003maximizing} introduced influence maximization and proved that since the objective is submodular, greedily selecting nodes gives a $\left(1 - \frac1\euler\right)$-optimal solution. There has since been substantial interest among the AI community both in developing more scalable algorithms (see \citet{li2018influence} for a recent survey) %\cite{chen2010scalable,jung2012irie,tang2014influence,li2018influence}
, as well as in addressing the challenges of deployment in public health settings \cite{yadav2016using,wilder122018maximizing}. Recently, such algorithms have been used in real-world pilot tests for HIV prevention amongst homeless youth \cite{yadav2018bridging,wilder2018end}, driving home the  need to consider fairness as influence maximization is applied in socially sensitive domains. To our knowledge, no previous work considers fairness specifically for influence maximization. The techniques we introduce to optimize fairness metrics are related to research on multi-objective submodular maximization (outside the context of fairness), and we improve existing theoretical guarantees for this general problem \cite{chekuri2010dependent,udwani2018multi}. 

Outside of influence maximization, the general idea of diversity as an optimization constraint has received considerable attention in recent years; it has been studied in multiwinner elections (see ~\cite{bredereck2018diversity,faliszewski2018multi} for an overview), resource allocation \cite{benabbou2018diversity,aghaei2019learning}, and matching problems \cite{ahmed2017diversity,hamada2017quotas}. 
We note that some of the above works (e.g. \citet{ahmed2017diversity} and \citet{shcumann2017diverse}) use a submodular objective function as a means of achieving diversity; interestingly, while the classic influence maximization target function is submodular, it is no longer so under diversity constraints.
Group fairness has been studied extensively in the voting theory literature, where the objective is to identify a committee of $k$ candidates that will satisfy subsets of voters (see a comprehensive overview in \cite{faliszewski2018multi}). 
There have also been several works on group fairness in fair division, defining notions of group envy-freeness \cite{conitzer2019group,fain2018fair,segal2018fair,todo2011group}, and a group maximin share guarantee \cite{barman2019group,suksompong2018maximin}.   
\section{Model}
Agents are embedded in a social network $G = (V,E)$.  An edge $(i,j) \in E$ represents the ability for agent $v_i$ to influence or \emph{activate} $v_j$. $G$ may be undirected or directed.

% \subsection{Diversity}
% (however this is decided upon)
\paragraph{Diversity: } Each agent in our network may identify with one or more groups within the larger population.  These represent different ethnicities, genders, sexual orientations, or other groups for which fair treatment is important. Our goal is to maximize influence in a way such that each group receives at least a ``fair'' share of influence (more on this below). Let us designate these groups as $\mathcal{C} = \{C_1, \ldots C_m\}$.  Each {\bf group} $C_i$ represents a non-empty subset of V, $\emptyset \neq C_i \subseteq V$.  Each agent must belong to at least one group, but may belong to multiple groups; i.e. $C_1 \cup C_2 \cup \ldots C_m = V$.  In particular, this allows for the expression of intersectionality, where an individual may be part of several minority groups.

\paragraph{Influence maximization: }

We model influence using the \emph{independent cascade model} \cite{kempe2003maximizing}, the most common model in the literature. All nodes begin in the inactive state. The decision maker then selects $k$ \emph{seed nodes} to activate. Each node that is activated makes one attempt to activate each of its inactive neighbors; each attempt succeeds independently with probability $p$. Newly activated nodes attempt to activate their neighbors and so on, with the process terminating once there are no new activations.

We define the \textbf{influence} of nodes $A \subseteq V$, denoted $\Infl_G(A)$, as the expected number of nodes activated by seeding $A$. Of these, let $\Infl_{G,C_i}(A)$ be the expected number of activated vertices from $C_i$. Traditional influence maximization seeks a set $A$, $|A| \leq k$, maximizing $\Infl_G(A)$. Using a slight abuse of notation, let $\Infl_G(k)$ be the maximum influence that can be achieved by selected $k$ seed nodes.  That is, $\Infl_G(k) = \max_{|A|=k} \Infl_G(A)$.  Analogously, we define $\Infl_{G,C_i}(k)$ as the maximum expected number of vertices from $C_i$ that can be activated by $k$ seeds. We now propose two means of capturing group fairness in influence maximization.

\paragraph{Maximin Fairness: }

Maximin Fairness captures the straightforward goal of improving the conditions for the \textit{least well-off} groups.  That is, we want to maximize the minimum influence received by any of the groups, as proportional to their population.  This leads to the following utility function:

\[
U^\Maximin(A) = \min_{i} \frac{\Infl_{G,C_i}(A)}{|C_i|}
\]

Subject to this maximin constraint, we seek to maximize overall influence.
Thus, we define  $\Infl_G^\Maximin = \Infl_G(B)$ with $B = \argmax_{A \subseteq V, |A|=k} U^\Maximin(A)$.  That is, $\Infl_G^\Maximin$ is the expected number of nodes activated by a seed configuration that maximizes the minimum proportional influence received by any group. This corresponds to the legal concept of \emph{disparate impact}, which roughly states that a group has been unfairly treated if their ``success rate" under a policy is substantially worse than other groups (see \cite{barocas2016big} for an overview). Therefore, maximin fairness may be significant to governmental or community organizations which are constrained to avoid this form of disparity. However, optimizing for equality of outcomes may be undesirable when some groups are simply much better suited than others to a network intervention. For instance, if one group is very poorly connected, maximin fairness would require that large number of nodes be spent trying to reach this group, even though additional seeds have relatively small impact.

% In the next two selections, we will discuss two fairness concepts and how to incorporate them in our influence maximization problem.

% labelled $v_i=0$ for all nodes $v_i \in V$. As influence propagates through the network, nodes switch from an inactive state to an activated state, denoted $v_i=1$.  

% The cascade model begins by selecting $k$ seed nodes to activate as the \textbf{seed configuration}.  This is sometimes called the \textbf{intervention} step in literature.
% Thereafter, each node that is activated attempts to activate each neighbor $v \in \{j \in V : (i,j) \in E\}$ with an independent probability of success equal to $p$.  If $v$ is already active, $v$ cannot be activated again.  This proceeds until no more nodes are activated.

% \subsection{Influence Maximization}

\paragraph{Diversity Constraints: }

We now propose an alternate fairness concept by extending the notion of individual rationality to \textbf{Group Rationality}. The key idea is that no group should be better off by leaving the (influence maximization) game with their proportional allocation of resources and allocating them internally. For each group $C_i$, let $k_i = \ceil{ k |C_i| / |V|}$ be the number of seeds that would be fairly allocated to the group $C_i$ based on the group's size within the larger population, rounded up to remove any doubt that this group receives a fair share.
%\topic{Fair Allocation $k_i$}
$k_i$ is the \textbf{fair allocation} of seeds to the group.  
% Note that $\sum_i k_i \geq k$ due to the rounding.

Let $G[C_i]$ be the subgraph induced from $G$ by the nodes $C_i$.  This represents the network formed by group $C_i$ if they were to separate from the original network. 
Now, we define the \textbf{group rational influence} that each group $C_i$ can expect to receive as the number of nodes they expect to activate if they left the network, with their fair allocation of $k_i$ seeds.  We denote this group rational influence for $C_i$ as $\Infl_{G[C_i]}(k_i)$. Then, we devise a set of \textbf{diversity constraints} that any group rational seeding configuration $A$ with $k$ seeds must satisfy: $\Infl_{G,C_i}(A) \geq \Infl_{G[C_i]}(k_i), \forall i$.  That is, the influence received by each group is at least equal to what each group may accomplish on its own when given its fair share of $k_i$ seed nodes.

The diversity constraint objective function is to maximize the expected number of nodes activated, subject to the above diversity constraint.  The utility for selecting seed nodes $A$ is:

\begin{align*}
U^\Rational(A) =~
\begin{cases}
	\Infl_G(A), &\text{if}~\Infl_{G,C_i}(A) \geq \Infl_{G[C_i]}(k_i),
		\forall i. \\
	0, &\text{otherwise}.
\end{cases}	
\end{align*}

The maximum expected influence obtained via a group rational seeding configuration $A$ is called the \textbf{rational influence} $\Infl_{G}^\Rational = \Infl_{G}(B)$, where $B = \argmax_{A \subseteq V,|A|=k} U^\Rational(A) $.

\paragraph{Price of Fairness: }
To measure the cost of ensuring a fair outcome for the diverse population, we will measure the Price of Fairness, the ratio of optimal influence to the best achievable influence under our two fairness criteria. Here \textbf{optimal influence} $\Infl_{G}^\OPT = \Infl_{G}(k)$, which is the maximum amount of expected influence that can be obtained using any choice of $k$ seed nodes.  We omit the subscript where the context is clear. 

\[
\PoF^\Rational = \frac{\Infl^\OPT}{\Infl^\Rational} \hspace{0.25cm}
\PoF^\Maximin = \frac{\Infl^\OPT}{\Infl^\Maximin}
\]

\section{Optimization}

% We will first establish that neither utility functions $U^\Rational$ nor $U^\Maximin$ are submodular. Then, we will show the use of Multiobjective Optimization to find near-optimal, fair allocations. 

The standard approach to influence maximization is based on \emph{submodularity}. Formally, a set function $f$ on ground set $V$ is submodular if for every $A \subseteq B \subseteq V$ and $x \in V \setminus B$, $f(A \cup \{x\}) - f(A) \geq f(B \cup \{x\}) - f(B)$. This captures the intuition that additional seeds provide diminishing returns. However, both of our fairness concepts are easily shown to violate this property (proofs are deferred to the appendix):

\begin{theorem} \label{thm:not-submod}
$U^\Maximin$ and $U^\Rational$ are not submodular.
\end{theorem}

% We define a {\bf submodular} function in the usual way:

% \leftskip=1cm \rightskip=1cm \noindent For sets $X$ and $Y$ in the domain $\Omega$, for every $X \subseteq Y \subseteq \Omega$ and $x \in \Omega \setminus Y$,
% \[
% f(X \cup \{x\}) - f(X) \geq f(Y \cup \{x\}) - f(Y)
% \]

% The same construction can be used to show that the group rational utility is also not submodular (details in the appendix).

% \begin{theorem} \label{thm:rational-submod}
% Group rational utility $U^\Rational$ is not submodular.
% \end{theorem}

Hence, we cannot apply the greedy heuristic to group-fair influence maximization.  However, we now show that optimizing either utility function reduces to \emph{multiobjective submodular maximization}, for which we we give an improved algorithm below. Consider the following generic problem: given monotone submodular functions $f_1...f_m$ and corresponding target values $W_1...W_m$, find a set $S$ satisfying $|S| \leq k$ with $f_i(S) \geq W_i$ for all $i$ (under the promise that such an $S$ exists). Roughly, $f_i$ will be group $i$'s utility, and $W_i$ will be the utility that we want to guarantee for $i$.  Suppose that we have an algorithm for the above multiobjective problem. Then, we can optimize the maximin objective by letting $f_i = \frac{\Infl_{G, C_i}}{|C_i|}$ and binary searching for the largest $W$ such that $f_i \geq W$ is feasible for all groups $i$. For diversity constraints, we let $f_i = \Infl_{G, C_i}$ and set the target $W_i = \Infl_{G[C_i]}(k_i)$. We then add another objective function $f_{\text{total}} = \Infl_{G}$ representing the combined utility and binary search for the highest value $W_{\text{total}}$ such that the targets $W_1...W_m, W_{\text{total}}$ are feasible. This represents the largest achievable total utility, subject to diversity constraints. Having reduced both fairness concepts to multiobjective submodular maximization, we now give an improved algorithm for this core problem.

% \subsection{Multiobjective Optimization Algorithm}
% \mt{There is only 3.1 but not 3.2; it would be better to have two subesctions or none...}

% \mt{was there some glue text here that got removed? This seems to be disconnected to the previous section; its unclear how we got to improved algorithms for MO maximization of montone submodular functions...or what was the problem with earlier algorithms which appears after this para?}
% We now introduce an improved algorithm for multiobjective maximization of monotone submodular objective functions.  \mt{where is b used? how are $f_i$ related to our multiple objectives?}

The multiobjective submodular problem was introduced by Chekuri et al.\ \shortcite{chekuri2010dependent}, who gave an algorithm which guarantees $f_i \geq (1 - \frac{1}{e})W_i$ for all $i$ provided that the number of objectives $m$ is smaller than the budget $k$ (when $m = \Omega(k)$, the problem is provably inapproximable \cite{krause2008robust}). Unfortunately, this algorithm is of mostly theoretical interest since it runs in time $O(n^8)$. Udwani \shortcite{udwani2018multi} recently introduced a practically efficient algorithm; however it obtains an asymptotic $(1 - \frac{1}{e})^2$-approximation instead of the optimal $\left(1 - \frac{1}{e}\right)$. We remedy this gap by providing a practical algorithm obtaining an asymptotic $\left(1 - \frac{1}{e}\right)$-approximation (Algorithm \ref{alg:multi-submodular}). Its runtime is comparable to, and under many conditions faster than, the algorithm of \cite{udwani2018multi}. We present the high-level idea behind the algorithm here, with additional details present in the appendix.

Previous algorithms \cite{chekuri2010dependent,udwani2018multi} start from a common template in submodular optimization, which we also build on. The main idea is to relax the discrete problem to a continuous space. For a given submodular function $f$, its \emph{multilinear extension} $F$ is defined on $n$-dimensional vectors $x$ where $0 \leq x_j \leq 1$ for all $j$. $x_j$ represents the probability that item $j$ is included in the set. Formally, let $S \sim x$ denote a set which includes each $j$ independently with probability $x_j$. Then, we define $F(x) = \E_{S \sim x}[f(S)]$, which can be evaluated using random samples. 

The main challenge is to solve the continuous optimization problem, which is where our technical contribution lies. Algorithm \ref{alg:multi-submodular} describes the high-level procedure, which runs our continuous optimization subroutine (line 2) and then rounds the output to a discrete set (line 3). Line 1, which ensures that all items with value above a threshold $\tau$ are included in the solution, is a technical detail needed to ensure the rounding succeeds. The rounding process captured in lines 1 and 3 is fairly standard and used by both previous algorithms \cite{chekuri2010dependent,udwani2018multi}. Our main novelty lies in an improved algorithm for the continuous problem, \textsc{MultiFW}. 

	\begin{algorithm}
		\caption{Multiobjective Optimization$(\gamma, \tau, T, T', 
		\eta)$}\label{alg:multi-submodular}
		\begin{algorithmic}[1] 
			\State $S_1 =  \{j : f_i(\{j\}) \geq \tau$ for some $i\}$
			\State $x =$\textsc{MultiFW}$(k-|S_1|, \{\gamma\left(W_i - f_i(S)\right)\}_{i=1}^m)$
			\State $S_2 = $\textsc{SwapRound}$(x_{\text{int}})$ //see \cite{chekuri2010dependent}
			\State\Return $S_1 \cup S_2$
		\end{algorithmic}
	\end{algorithm}

	\begin{algorithm}
		\caption{Multiobjective Frank-Wolfe$(k, \{W_i\})$}\label{alg:multifw}
		\begin{algorithmic}[1] 
			\State $x^0 = 0$
			\For{$t=1...T$}
			\State $v^t = \textsc{S-SP-MD}(x, \{i: W_i - F_i(x^{t-1}) \geq \epsilon\})$
			\State $x^{t} = x^{t-1} + \frac{1}{T}v^t$
			\EndFor
			\State\Return $ \textsc{ApproxDecomposition}(x^T)$ //see \cite{mirrokni2017tight}
			
			\Function{S-SP-MD}{$x, \mathcal{I}$}
			\State Initialize $v$ s.t. $||v||_1  = k$ and $y \in \Delta(\mathcal{I})$ arbitrarily
			\For{$\ell = 1...T'$}
			\State Sample $i \sim y$; set $\hat{\nabla}_v = \frac{1}{W_i - F_i(x)}\oracle{i}{grad}(x)$
			\State Sample $j \sim v$;  $\hat{\nabla}_y = k \cdot  \text{diag}\left(\frac{1}{\vec{W} - \vec{F}(x)}\right)\oracle{j}{item}(x)$
			\State $y = \frac{y e^{-\eta \hat{\nabla}_y}}{||y e^{-\eta \hat{\nabla}_y}||_1}$
			\State $v = k \frac{\min\{v e^{\eta \hat{\nabla}}_v, 1\}}{||\min\{v e^{\eta \hat{\nabla}}_v, 1\}||_1}$
			\EndFor
			\EndFunction
		\end{algorithmic}
	\end{algorithm}

% . First, it  constructs a set $S_1$ containing all items $j$ whose value with respect to any objective $f_i$ exceeds a threshold $\tau$ (line 1). This is a technical detail needed to ensure the rounding process succeeds. Second, it uses a continuous-space algorithm (\textsc{MultiFW}, Algorithm \ref{alg:multifw}) to find a good fractional point $x$ (line 2). Third, it rounds $x$ to a discrete set $S_2$ (line 3). The final output is $S_1 \cup S_2$. The first and third steps are fairly standard and used by both previous algorithms \cite{chekuri2010dependent,udwani2018multi}. Our main novelty lies in an improved algorithm for the second step, \textsc{MultiFW}. 

\textsc{MultiFW} implements a Frank-Wolfe style algorithm to simultaneously optimize the multilinear extensions $F_1...F_m$ of the discrete objectives. The algorithm proceeds over $T$ iterations. Each iteration first identifies $v^t$, a good feasible point in continuous space (Algorithm \ref{alg:multifw}, line 3). Then, the current solution $x^t$ is updated to add $\frac{1}{T}v^t$ (line 4). The final output is an approximate decomposition of $x^T$ into integral points, produced using the algorithm of \cite{mirrokni2017tight}. This is a technical step required for the rounding procedure. 

The key challenge is to efficiently find a $v^t$ that makes sufficient progress towards \emph{every} objective simultaneously. We accomplish this by introducing the subroutine \textsc{S-SP-MD} (lines 6-12), which runs a carefully constructed version of stochastic saddle-point mirror descent \cite{nemirovski2009robust}. The idea is to find a $v$ for which $v \cdot \nabla_i F_i(x^{t-1})$ is large enough for all objectives $i$. We convert this into the saddle point problem of maximizing $\min_{i \in \mathcal{I}} v \cdot \nabla_i F_i(x^{t-1})$. $\mathcal{I}$ denotes the set of objectives $i$ where $W_i  - F_i(x^{t-1}) \geq \epsilon$ (i.e., those where we still need to make progress). We let $\Delta(\mathcal{I})$ denote the set of all distributions over $i$. Our approach only requires \emph{stochastic} gradients, a necessary feature since computing $\nabla_i F(x^{t-1})$ exactly may be intractable when the objective itself is randomized (as in influence maximization). 

Specifically, we assume access to two gradient oracles.  First, a stochastic gradient oracle $\mathcal{A}^i_{\text{grad}}$ for each multilinear extension $F_i$. Given a point $x$, $\mathcal{A}^i_{\text{grad}}(x)$ satisfies $\E[\mathcal{A}^i_{\text{grad}}] = \nabla_x F_i(x)$. Second, a stochastic gradient oracle $\mathcal{A}^j_{\text{item}}$ corresponding to each item $j \in [n]$ (in influence maximization, the items are the potential seed nodes). $\mathcal{A}^j_{\text{item}}(x)$ satisfies $\E[\mathcal{A}^j_{\text{item}}(x)] = \left[\nabla_{x_j} F_1(x) ... \nabla_{x_j} F_m(x)\right]$. We assume that $||\mathcal{A}^i_{\text{grad}}(x)||_\infty, ||\mathcal{A}^j_{\text{item}}(x)||_\infty \leq c$ for some constant $c$. Linear-time oracles are available for many common submodular maximization problems (e.g., coverage functions and facility location \cite{karimi2017stochastic}). Given such oracles, we implement a stochastic mirror descent algorithm for the maximin problem. We can interpret the algorithm as solving a game between the max player and the min player. The max player controls $v$, while the min player controls a variable $y$ representing the weight put on each objective. Intuitively, the min player will put large weights where the max player is doing badly, forcing the max player to improve $v$. Formally, in each iteration, the players take exponentiated gradient updates (lines 8-12). The max player obtains a stochastic gradient by sampling an objective with probability proportional to the current weights $y$, while the min player samples an item proportional to $v$ and uses that item's contribution to estimate the max player's current performance on each objective. We prove that these updates converge rapidly to the optimal $v$. With the subroutine in hand, our main algorithmic result is the following guarantee for Algorithm \ref{alg:multi-submodular}.  Here, $b = \max_{i, j} f_i(\{j\})$ is the maximum value of a single item.

	\begin{theorem}
		Given a feasible set of target values $W_1...W_n$, Algorithm \ref{alg:multi-submodular} outputs a set $S$ such that $f_i(S) \geq (1-\epsilon)\left(1 - \frac{m}{k(1 +\epsilon')\epsilon^3}\right)\left(1 - \frac{1}{e}\right)W_i - \epsilon$ with probability at least $1-\delta$. Asymptotically as $k\to\infty$, the approximation ratio can be set to approach $1 - 1/e$ so long as $m= o(k \log^3 k)$. The algorithm requires $O(nm)$ $\epsilon'$-accurate value oracle calls, $O(m\frac{bk^2}{\epsilon}\log \frac{1}{\delta})$ $\epsilon$-accurate value oracle calls, $O\left(\frac{bk^4 c^2}{\epsilon^5}  \log\left(n+ \frac{bk}{\delta\epsilon}\right)\right)$ calls to $\oracle{}{grad}$ and $\oracle{}{item}$, and $O\left(\frac{nk^2b^2}{\epsilon^2} + \frac{mk^2b}{\epsilon} + \frac{k^3 b^2}{\epsilon^2}\right)$ additional work. 
	\end{theorem}
	
	This says that Algorithm \ref{alg:multi-submodular} asymptotically converges to a $\left(1 - \frac{1}{e}\right)$-approximation when the budget $k$ is larger than the number of objectives $m$ (i.e., the conditions under which the problem is approximable). All terms in the approximation ratio are identical to Udwani \shortcite{udwani2018multi}, except that we improve their factor $\left(1 - \frac{1}{e}\right)^2$ to $\left(1 - \frac{1}{e}\right)$. The runtime is also identical apart from the time to solve the continuous problem (\textsc{MultiFW} vs their corresponding subroutine). This is difficult to compare since our respective algorithms use different oracles to access the functions. However, both kinds of oracles can typically be (approximately) implemented in time $O(n)$. Udwani's algorithm uses $O(n)$ oracle calls, while our's requires $O(bk^4c^2 \log n)$. For large-scale problems, $n$ typically grows much faster than $k$, $b$, and $c$ (all of which are often constants, or near-so). Hence, trading $O(n^2)$ runtime for $O(n\log n)$ can represent a substantial improvement. We present a more detailed discussion in the appendix.

	To instantiate Algorithm \ref{alg:multi-submodular} for influence maximization, we just need to supply appropriate stochastic gradient oracles. To our knowledge, no such oracles were previously known for influence maximization, which is substantially more complicated than other submodular problems because of additional randomness in the objective; naive extensions of previous methods require $O(n^2)$ time. We provide efficient $O(k n \log n)$ time stochastic gradient oracles by introducing a randomized method to simultaneously estimate many entries of the gradient at once (details may be found in the appendix).

\section{Price of Fairness}

In this section, we show that both definitions for the Price of Fairness can be unbounded; moreover, allowing nodes to join multiple groups can, counter-intuitively, worsen the PoF.
The proofs in this section use undirected graphs.  As they are more restrictive, the result naturally hold for directed graphs.

\begin{theorem}
As $n \to \infty$ and $p \to 0$, $\PoF^\Rational \to \infty$.
\end{theorem}

\begin{proof}

We construct a graph $G$ with two parts.  In Part $L$, we have $s-1$ vertices all disjoint except for two vertices; label one of these $x3$.  In Part $S$, we have a star with $s+1$ nodes.  Label a leaf node $x_1$ and the central node $x_2$.  We define two groups: $C_1$ is comprised of the $s$ degree-1 vertices of $S$, and $C_2$ for the remaining $s$ vertices, which includes the vertices of $L$ and the central vertex $x_2$ of the star.  There are $k=2$ seeds, and since $|C_1|=|C_2|$, they each have a fair allocation of $k_1=k_2=1$ seeds.  The figure below illustrates this network.

\begin{SCfigure}[2][h]
\centering
\includegraphics[width=0.3\columnwidth]{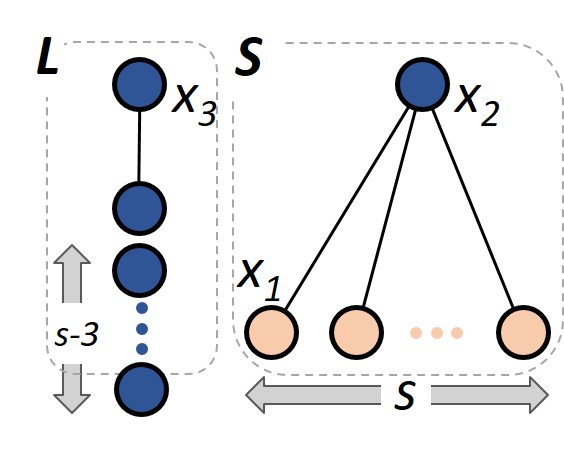}
\RegTextCaption{Since the subgraph induced by $C_1$ is comprised of isolated vertices, they have a rational allocation of $\Infl_{G[C_1]}(1) = 1$.  The subgraph induced by $C_2$ is a collection of isolated vertices and a $K_2$, its rational allocation is $\Infl_{G[C_2]}(1) = 1+p$.}
\label{fig:undirected-example}
\end{SCfigure}

We are interested in two seeding configurations: $A = \{x_1,x_3\}$ and $B= \{x_2,x_3\}$.  We can verify that configuration $A$ is fair.  The $A$ activates $1+p$ nodes in Part $L$, and $1+p+(s-1)p^2$ in Part $S$, for a total of $\Infl_{G}(A) = 2+2p+(s-1)p^2$.

Now consider configuration $B$.  $C_1$ receives $ps$ influence, and since $p < \frac{2}{n} = \frac{1}{s}$, $C_1$ does not receive its group rational share of influence.  However, we can verify that this seeding is optimal.  Part $L$ receives $(1+p)$ influence, and Part $S$ receives $1+ps$.  Therefore, $\Infl_{G}(B) = 2+p+ps$.

We may then calculate our Price of Fairness:

\[
\PoF^\Rational = \frac{\Infl_{G}^\OPT}{\Infl_{G}^\Rational} 
     = \frac{2+p+ps}{2+2p+(s-1)p^2}
\]

And if we take the limit as $n \to \infty$, $s \to \infty$, $\PoF \to 1/p$.  Finally, as as $p \to 0$, $\PoF \to \infty$.
\end{proof}

\noindent The appendix details a similar result for Maximin Fairness:

\begin{theorem} \label{thm:maximin_pof}
$\PoF^\Maximin$ is unbounded.
\end{theorem}

Frequently, an individual may identify with multiple groups.  Intuitively, we might expect such multi-group membership to improve the influence received by different groups and make the group-fairness easier to achieve (see the appendix for an example).  However, in this section, we show that this is not always true, and giving even a single node membership in a second group can cause the Price of Fairness to worsen by an arbitrarily large amount.

\begin{theorem}
Given graphs $G$ with groups $C_1$ and $C_2$, and $G'$ with groups $C'_1$ and $C'_2$, where $G'=G$, $C'_1 = C_1$ and $C'_2$ is obtained from $C_2$ by the addition of one vertex $x_1$ ($x_1 \in C_1$, $x_1 \notin C_2 $).  It is possible for 
$\lim\limits_{n \to \infty} 
\frac{\PoF^\Rational_{G'}}{\PoF^\Rational_G} = \infty$.
\end{theorem}

\begin{proof}

Consider a graph $G$ with two components: one component $K$ contains 2 vertices joint by an edge, the other component $S$ is a star with $s+1$ vertices ($s \geq 1/p$).  There are two groups: $C_1$ contains all degree-1 vertices from $S$ and one vertex from $K$; $C_2$ contains the other vertex $x_1$ from $K$ and the central vertex $x_2$ from $S$.  There is one seed ($k=1$), and the fair allocation of seeds to each group is $k_1=k_2=1$.

\begin{figure}[h]
\centering
\begin{minipage}{0.49\columnwidth}
\centering
\includegraphics[width=0.6\columnwidth]{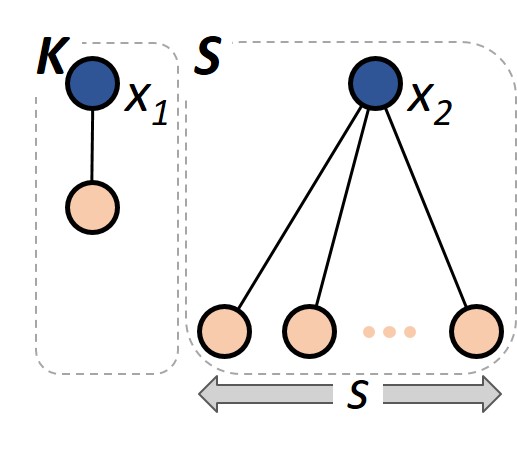}
\caption*{$G$ with Disjoint Groups.}
\end{minipage}
%\hspace{0.2cm}
\begin{minipage}{0.49\columnwidth}
\centering
\includegraphics[width=0.6\columnwidth]{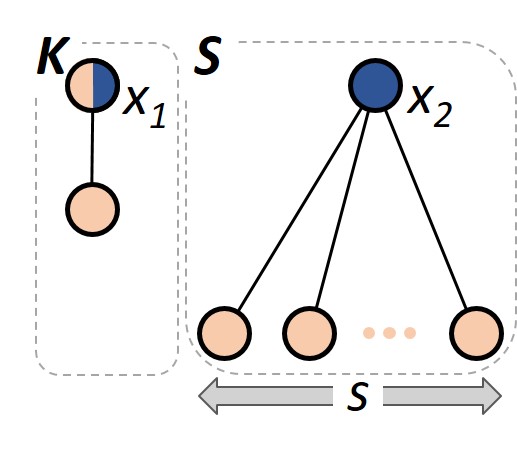}
\caption*{$G'$ with Overlapping Groups.}
\end{minipage}
\end{figure}

\begin{figure*}[ht!]
    \centering    \includegraphics[width=1.6in]{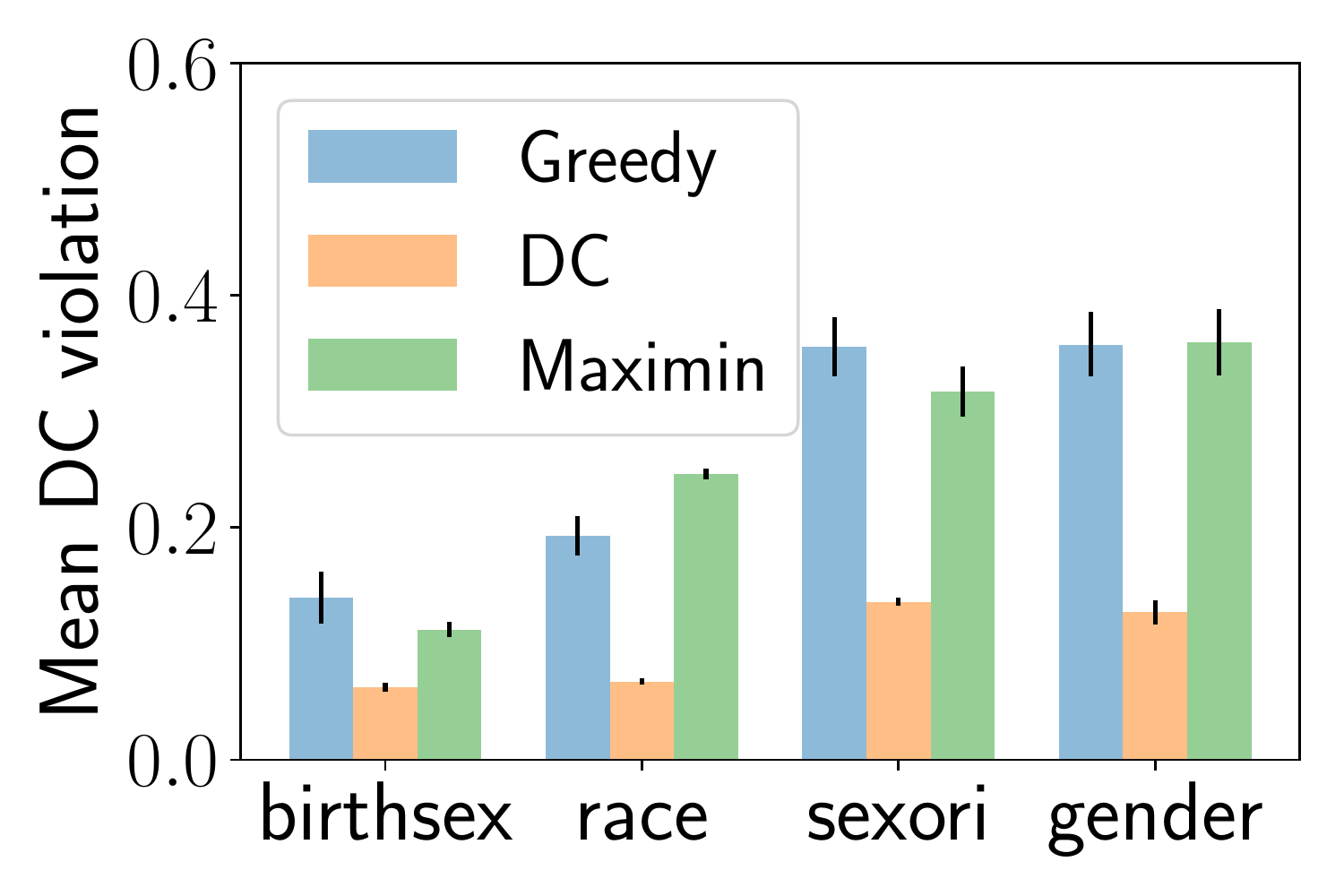}
    \includegraphics[width=1.6in]{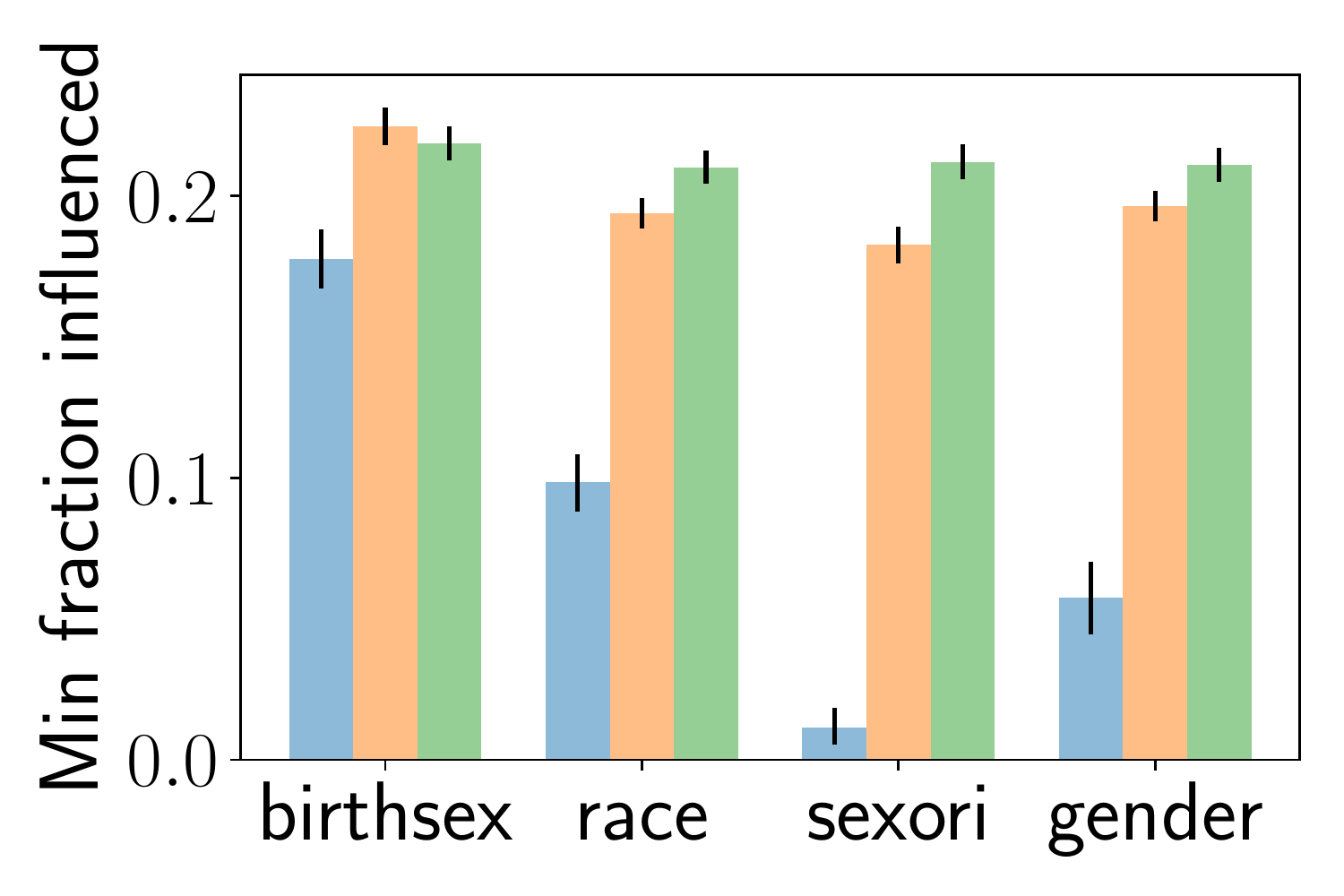}
    \includegraphics[width=1.6in]{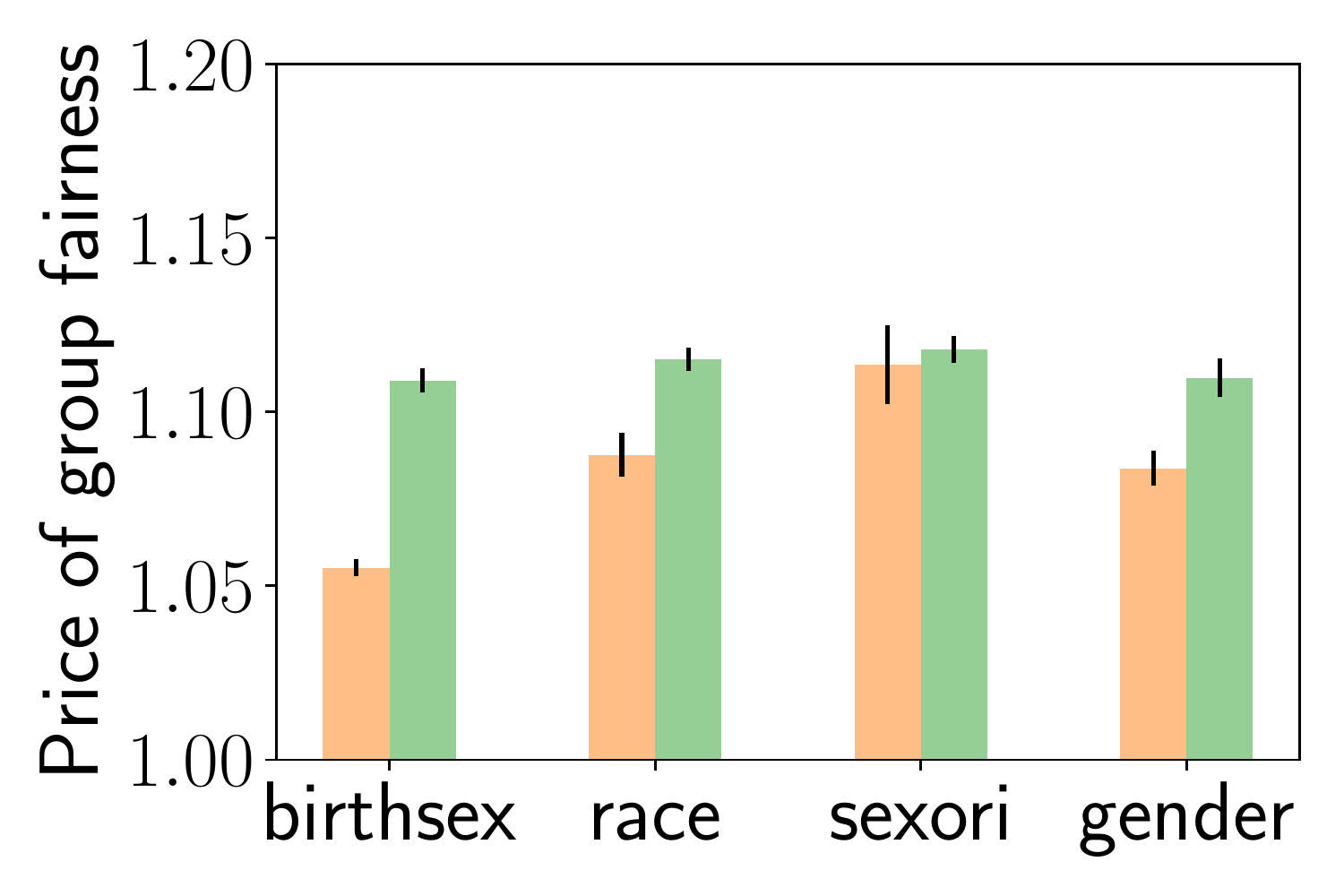}
    \includegraphics[width=1.6in]{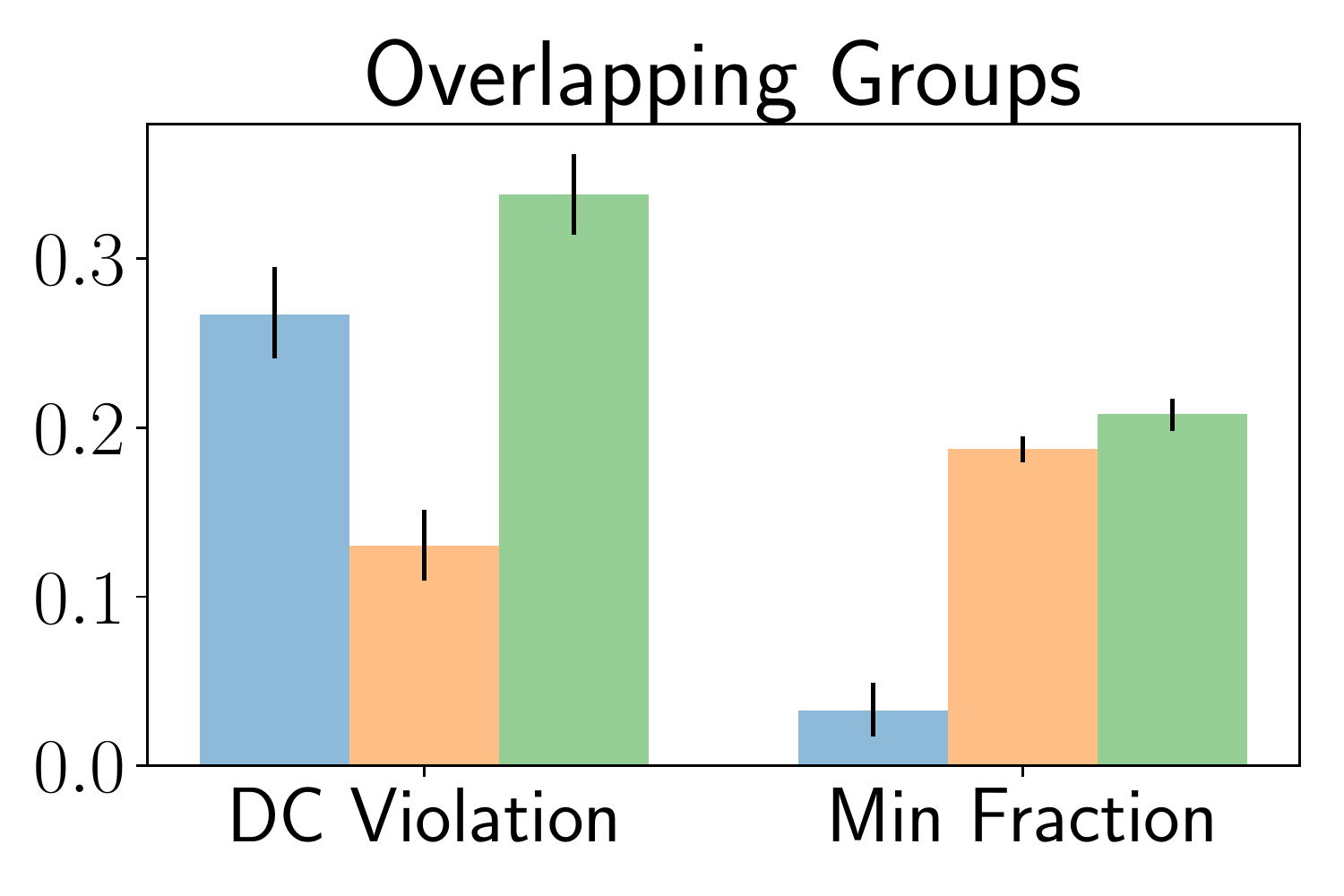}
    %\caption{Average performance on homeless youth social networks.}

    \includegraphics[width=1.6in]{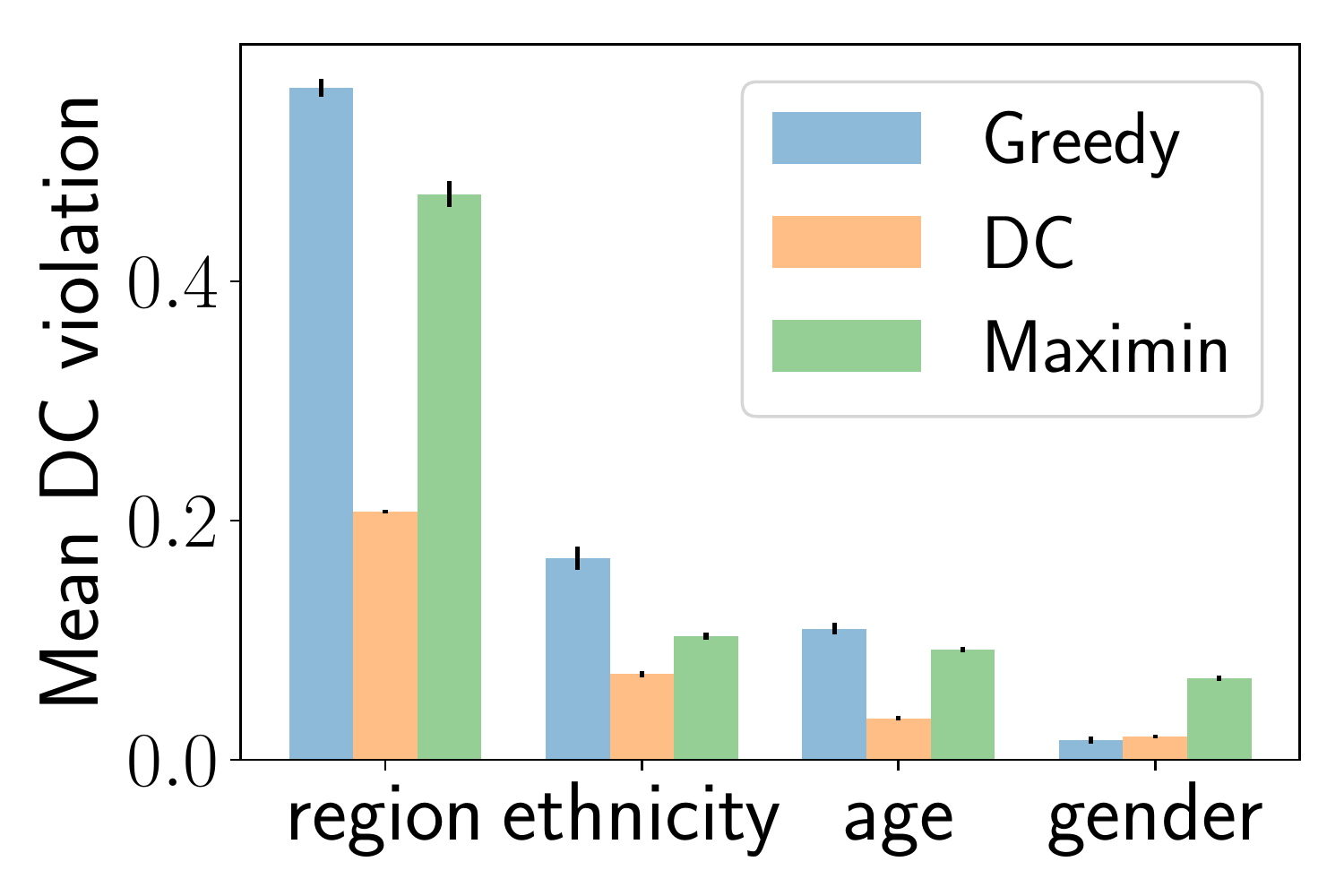}
    \includegraphics[width=1.6in]{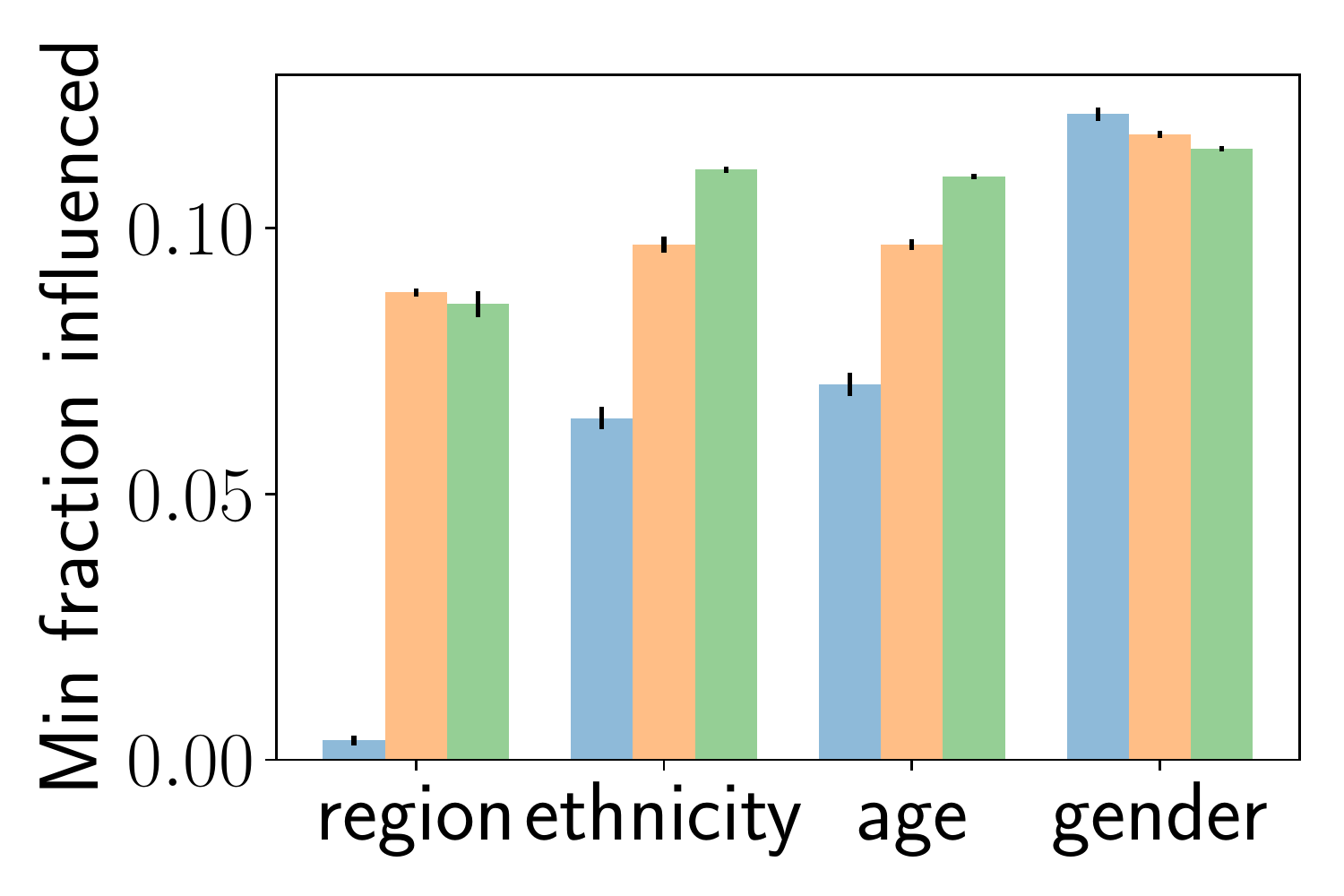}
    \includegraphics[width=1.6in]{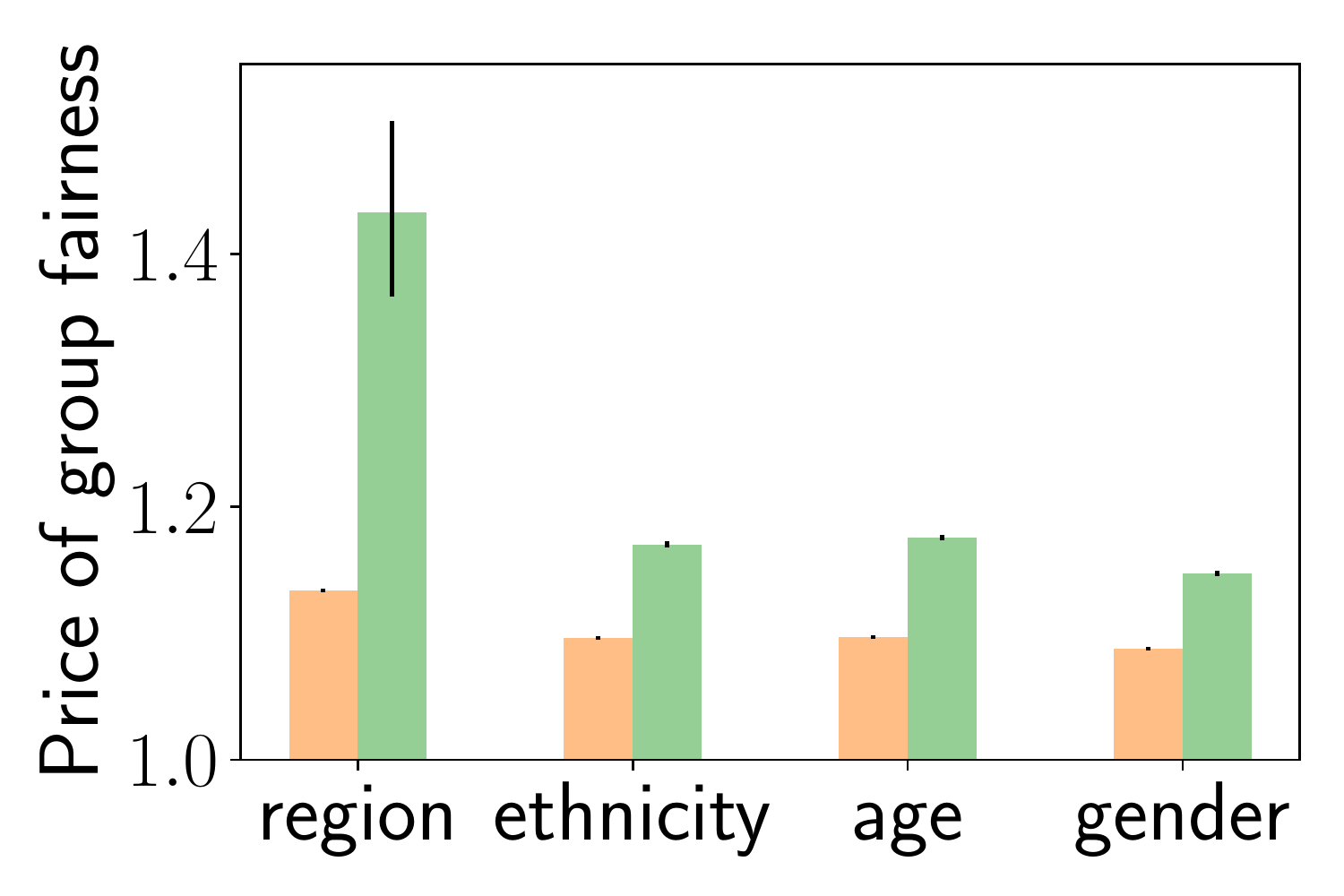}
    \includegraphics[width=1.6in]{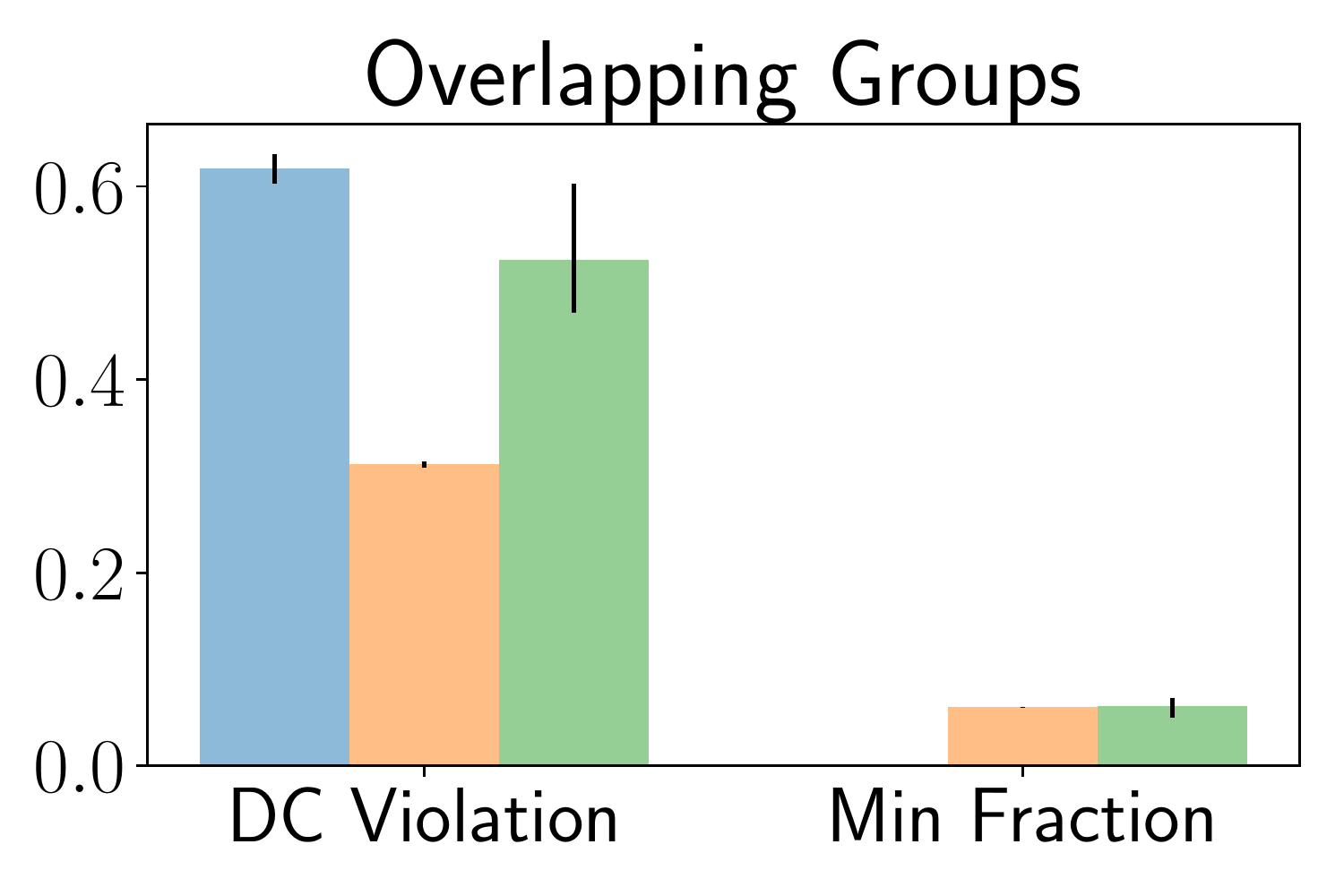}
    \caption{Average performance on homeless youth social networks (top) and simulated Antelope Valley networks (bottom).}
    \label{fig:hyh-spa}
\end{figure*}
Since the induced subgraphs for both groups comprise only of isolated nodes, the group rational influence for each group is $\Infl_{G[C_1]}=\Infl_{G[C_2]}=1$.  Therefore, the seed set $\{x_2\}$ is both fair and optimal, giving an expected influence of $\Infl_G(\{x_2\}) = 1+ps$.

Now, let us modify $G$ by letting $x_1$ belong to \emph{both} communities to obtain $G'$, and communities $C'_1$ and $C'_2$. The group rational influence for $C'_2$ remains the same (its members have not changed) but $\Infl_{G'[C'_1]}$ has increased to $1+p$ (by seeding $x_1$).  In fact, this forces the fair allocation to seed $x_1$ instead of $x_2$, for a fair influence of $\Infl_{G'}(\{x_1\}) = 1+p$.

As $n \to \infty$, $\lim\limits_{n \to \infty} 
\frac{\PoF^\Rational_{G'}}{\PoF^\Rational_G} = \lim\limits_{s \to \infty} \frac{1+ps}{1+p} = \infty$.
\end{proof}

A more technical construction can demonstrate a similar result for Maximin Fairness, but only as $p \to \frac{1}{3}^{-}$; that is, $p<\frac{1}{3}$ as $p$ approaches $\frac{1}{3}$.  The proof is provided in the appendix.

\begin{theorem} \label{thm:maximin-overlap}
Given graphs $G$ with groups $C_1$ and $C_2$, and $G'$ with groups $C'_1$ and $C'_2$, where $G'=G$ $C'_1 = C_1$ and $C'_2$ is obtained from $C_2$ by the addition of one vertex $x_1$ ($x_1 \in C_1$, $x_1 \notin C_2 $).  It is possible for 
$\lim\limits_{\substack{ n \to \infty \\ p \to \frac{1}{3}^{-}}} 
\frac{\PoF^\Maximin_{G'}}{\PoF^\Maximin_G} \to \infty$.
\end{theorem}

\section{Experimental results}

% \newcommand{\ra}[1]{\renewcommand{\arraystretch}{#1}}
% \begin{table*}\centering
% 	\ra{1.3}
% 	\fontsize{9}{9}
% 	\selectfont
% 	\caption{Percent of MILP value obtained by approximation algorithm.}\label{table:mov-approx}
% 	\begin{tabular}{@{}rrrrcrrrcrrrccrr@{}}\toprule
% 		& \multicolumn{5}{c}{Mean DC Violation} & \phantom{abc}& \multicolumn{5}{c}{Min Fraction Influenced} &
% 		\phantom{abc} & \multicolumn{3}{c}{PoGF} & \phantom{abc} \\
% 		\cmidrule{2-6} \cmidrule{8-12} \cmidrule{14-18} 
% 		& region & ethnicity & age & gender & health && region & ethnicity & age & gender & health && region & ethnicity & age & gender & health\\ \midrule
% 		Greedy &99.5 & 99.3 & 100.& 100 & 100 && 99.5 & 99.3 & 100.& 100 & 100 && 99.5 & 99.3 & 100.& 100 & 100 \\
% 		Greedy &99.5 & 99.3 & 100.& 100 & 100 && 99.5 & 99.3 & 100.& 100 & 100 && 99.5 & 99.3 & 100.& 100 & 100 \\
% 		Greedy &99.5 & 99.3 & 100.& 100 & 100 && 99.5 & 99.3 & 100.& 100 & 100 && 99.5 & 99.3 & 100.& 100 & 100 \\
% 		\bottomrule
% 	\end{tabular}
	
% \end{table*}

We now investigate the empirical impact of considering fairness in influence maximization. We start with experiments on a set of four real-world social networks which have been previously used for a socially critical application: HIV prevention for homeless youth. Each network has 60-70 nodes, and represents the real-world social connections between a set of homeless youth surveyed in a major US city. Each node in the network is associated with demographic information: their birth sex, gender identity, race, and sexual orientation. Each demographic attribute gives a partition of the network into anywhere from 2 to 6 different groups. For each partition, we compare three algorithms: the standard greedy algorithm for influence maximization, which maximizes the total expected influence (Greedy), Algorithm \ref{alg:multi-submodular} used to enforce diversity constraints (DC), and Algorithm \ref{alg:multi-submodular} used to find a maximin fair solution (Maximin). We set the propagation probability to be $p = 0.1$ and fixed $k = 15$ seeds (varying these parameters had little impact). We average over 30 runs of the algorithms on each network (since all of the algorithms use random simulations of influence propagation), with error bars giving bootstrapped 95\% confidence intervals. 

Figure~\ref{fig:hyh-spa}~(top) shows that the choice of solution concept has a substantial impact on the results. For the diversity constraints case, we summarize the performance of each algorithm by the mean percentage violation of the constraints over all groups. For the maximin case, we directly report the minimum fraction influenced over all groups. We see that greedy generates substantial unfairness according to either metric: it generates the highest violations of diversity constraints, and has the smallest minimum fraction influenced. Greedy actually obtains near-zero maximin value with respect to sexual orientation. This results from it assigning one seed to a minority group in a single run and zero in others. 

DC performs well across the board: it reduces constraint violations by approximately 55-65\% while also performing competitively with respect to the maximin metric (even without explicitly optimizing for it). As expected, the Maximin algorithm generally obtains the best maximin value. DC actually attains slightly better maximin value for one attribute (birthsex); however, the difference is within the confidence intervals and reflects slight fluctuations in the approximation quality of the algorithms. However, Maximin performs surprisingly poorly with respect to diversity constraint violations. This indicates that optimizing exclusively for equal influence spread may force the algorithm to focus on poorly connected groups which exhibit severe diminishing returns. DC is able to attain almost as much influence in such groups but is then permitted to focus its remaining budget for higher impact. Interestingly, the price of fairness is relatively small for both solution concepts, in the range 1.05-1.15 (though it is higher for maximin than for DC). This indicates that while standard influence maximization techniques can introduce substantial fairness violations, mitigating such violations may be substantially less costly in real world networks than the theoretical worst case would suggest. 

Finally, the rightmost plot in the top row of Figure~\ref{fig:hyh-spa} explores an example with overlapping groups. Specifically, we consider the race and birthsex attributes so that each node belongs to two groups. Constraint violations are somewhat higher than for either attribute individually, but the price of fairness remains small (1.07 for DC and 1.13 for Maximin). 

In Figure~\ref{fig:hyh-spa}~(bottom), we examine 20 synthetic networks used by Wilder et al. \shortcite{wilder2018optimizing} to model an obesity prevention intervention in the Antelope Valley region of California. Each node in the network has a geographic region, ethnicity, age, and gender, and nodes are more likely to connect to those with similar attributes. Each network has $500$ nodes and we set $k=25$. Overall the results are similar to the homeless youth networks. One exception is the high price of fairness that maximin suffers with respect to the ``region" attribute (over 1.4), but the other $PoF$ values are relatively low (below 1.2). We also observe that greedy obtains the (slightly) best maximin performance for gender, likely because the network is sufficiently well-mixed across genders that fairness is not a significant concern (as confirmed by the extremely low DC violations). Absent true fairness concerns, greedy may perform slightly better since it solves a simpler optimization problem.  However, in the last figure, we examine overlapping groups given by region and ethnicity  and observe that greedy actually obtains zero maximin value, indicating that there is one group that it never reached across any run. 

\section{Conclusions}

In this paper, we examine the problem of selecting key figures in a population to ensure the fair spread of vital information across all groups.  This problem modifies the classic influence maximization problem with additional fairness provisions based on legal and game theoretic concepts.  We examine two methods for determining these provisions, and show that the ``Price of Fairness'' for these provisions can be unbounded.  We propose an improved algorithm for multiobjective maximization to examine this problem on real world data sets.  We show that  standard influence maximization techniques often neglect smaller groups, and a diversity constraint based algorithm can ensure these groups receive a fair allocation of resources at relatively little cost.  As automated techniques become increasingly prevalent in society and governance, our technique will help ensure that small and marginalized groups are fairly treated.

%Analyze / speculate about experiment results.  Talk about implication on societal good.

%Talk about why Maximin normalized by population size is better than (1) unnormalized version and (2) normalizing by the group rational influence

%\[
%U_{maximin3}(A) = \min_{i} \frac{\Infl_{G,C_i}(A)}{\Infl_{G[C_i]}(k_i)}
%\]

%In this section, we consider extending our solution concept more generally.

%For instance, we may extend our groups to encompass every possible non-empty subset of $V$: $\mathcal{C} = 2^{|V|} \setminus \emptyset$.  This requires that no non-empty subset of vertices have any incentive to deviate by forming their own community, with their own fair allocation of seed nodes, and therefore is equivalent to the \textbf{core} solution concept.  However, it can be shown that this is impossible to satisfy in general.  Any singleton vertex $v$ has an incentive to defect to form its own group $C_v = \{v\}$ which is always allocated a single seed node, due to the ceiling function used in defining $k_v$, and therefore ensures all of its nodes are activated.

%Additional future work may examine how to select seeds over time, allowing the mechanism to observe the current influence spread from preceding seeds before selecting new ones.

%This work may also be extended to unknown graphs, such as in \cite{wilder122018maximizing}.  For instance, the structure of the social network is only partially known and nodes must be interviewed to uncover their neighbors.

\newpage
\bibliographystyle{named}
\bibliography{abbshort,firstnameshortbib}

\section{Appendix}
\appendix

\renewcommand*{\thetheorem}{\Alph{theorem}}

%\subsection{Unbounded PoF: Special Case when p=1} \label{sec:case_p=1}

%Consider a graph $G$ comprised of two components: a clique $K$ with $3$ nodes, and a star $S$ with $s>3$ nodes.  Let us define two groups $C_1$ and $C_2$.  $C_1$ is comprised of the degree-1 vertices of $S$ and exactly one node $x_1$ in $K$; $C_2$ contains the remaining vertices.  Let there be only one seed node ($k=1$).  This example is illustrated below.

%\begin{figure}[h]
%\centering
%\includegraphics[width=0.5\columnwidth]{}
%\caption*{Illustration of Unbounded Price of Fairness when $p=1$}
%%\label{fig:special-case-example}
%\end{figure}

%With propagation probability $p=1$, we can see that the group rational influence allocated to each group can be easily calculated.  The subgraph induced by $C_1$ has no edges, and so $\Infl_{G[C_1]}(1) = 1$.  The subgraph induced by $C_2$ is exactly two connected nodes and one isolated node, and therefore $\Infl_{G[C_2]}(1) = 2$.  Therefore, a fair allocation must allocated $1$ influence to $C_1$ and $2$ influence to $C_2$.  This can only be done by seeding $x_1$, giving a total influence of $\Infl_{fair} = 3$.  However, it is easy to see that seeding the star $S$ gives a higher influence $\Infl_\OPT = s$.  The Price of Fairness then is $\PoF = \frac{s}{3}$ which can be made arbitrarily high as $s$ increases.

\maketitle

\section{Price of fairness}

\begingroup
\def\thetheorem{\ref{thm:maximin_pof}}
\begin{theorem}
$\PoF^{M}$ is unbounded.
\end{theorem}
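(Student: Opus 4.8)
The plan is to exhibit an explicit one-parameter family of instances $(I_N)_{N \ge 1}$ for which the ratio defining $\PoF^{M}$ --- the optimal (utilitarian) social welfare divided by the largest social welfare attained by any maximin-optimal allocation --- grows without bound as $N \to \infty$. Since the maximin-optimal allocations form a subset of all allocations, this ratio is always at least $1$; the entire content is to force it arbitrarily high, which I would do by making equalization of the agents' utilities ruinously expensive in terms of total welfare.

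The construction uses two agents and a single homogeneous divisible resource (equivalently, $N$ identical indivisible items). Set $v_1(\text{whole}) = 1$ and $v_2(\text{whole}) = N$; writing $x \in [0,1]$ for the fraction given to agent $1$, we have $u_1 = x$ and $u_2 = N(1-x)$. First I would identify the maximin-optimal allocation: $\min\{x,\, N(1-x)\}$ is strictly increasing for $x < N/(N+1)$ and strictly decreasing thereafter, so the minimum is maximized at the unique point $x^{\star} = N/(N+1)$, where $u_1 = u_2 = N/(N+1)$. The associated social welfare is therefore $u_1 + u_2 = 2N/(N+1) < 2$ for every $N$.

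Next I would compute the unconstrained optimum: giving the entire resource to agent $2$ yields welfare $N$. Combining the two computations gives
\[
\PoF^{M} \;\ge\; \frac{N}{2N/(N+1)} \;=\; \frac{N+1}{2},
\]
which tends to infinity with $N$, establishing that $\PoF^{M}$ is unbounded.

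The one point that genuinely needs care is that $\PoF^{M}$ is defined through the \emph{best} maximin-optimal allocation, so exhibiting a single inefficient one does not suffice --- I must rule out a maximin-optimal allocation of high welfare. Here this is immediate, because strict monotonicity on either side of $x^{\star}$ makes the maximin-optimal allocation unique, so $2N/(N+1)$ really is the maximal fair welfare. For the indivisible variant this uniqueness is replaced by near-equalization up to a single item, and I would absorb the resulting $\pm 1$ slack into the estimate. Finally, if the paper's model insists on normalized valuations (each agent valuing the whole resource at $1$), the scale gap exploited above is unavailable for two agents; in that case I would instead let the number of agents grow, using an overlapping-taste construction in which raising the poorest agent to the common fair level forces a chunk of a high-value resource to be handed to an agent who values it little, and I would track how the accumulated welfare loss diverges. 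I expect this normalization-robust variant, rather than the two-agent computation, to be the main obstacle.
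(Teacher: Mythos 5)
There is a genuine gap here, and it is not a computational one: your argument is carried out in the wrong model. In this paper, $\PoF^{M}$ is defined for \emph{influence maximization on graphs}: an instance is a graph with vertex groups $C_1, C_2$, a cascade probability $p$, and a seed budget $k$; the feasible ``allocations'' are seed sets of size $k$, and a group's utility is the expected influence it receives (normalized by group size), as determined by the cascade process. You instead construct a fair-division instance --- one divisible resource, two agents with freely stipulated valuations $v_1 = 1$ and $v_2 = N$, and a continuum of allocations $x \in [0,1]$. None of these objects exist in the paper's setting: group utilities cannot be chosen arbitrarily but must be induced by graph structure, and with $k=1$ there is no continuum to optimize over, only a discrete choice of seed node. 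So while your two-agent computation is internally correct for cake-cutting, it does not produce instances of the problem the theorem is about, and the ratio you bound is not the paper's $\PoF^{M}$.

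The intuition you are exploiting (forcing equalization is ruinously expensive when one ``side'' is worth far more) is indeed the right one, but it must be realized by a graph, which is exactly what the paper does: take two disconnected components, a single edge and a large star with $s+1$ nodes, with $C_1$ consisting of one endpoint of the edge and $C_2$ containing everything else (including the star center), and $k=1$. The unconstrained optimum seeds the star center for total influence $1+ps$ but gives $C_1$ exactly zero; since any maximin-fair seeding must give $C_1$ positive utility, it is forced to seed the other endpoint of the edge, yielding total influence only $1+p$, and the ratio $(1+ps)/(1+p)$ diverges as $s \to \infty$. Note also that the caveat you flag at the end --- that the scale trick fails under normalized valuations --- is precisely the relevant case: the paper's maximin utility \emph{is} normalized per group, and the paper's construction sidesteps this not by a scale gap in values but by disconnectedness, which makes the optimal solution's utility for $C_1$ literally zero. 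Your proposal leaves exactly this needed construction (the ``normalization-robust variant'' you defer) undone, and that construction is the entire content of the theorem.
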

\addtocounter{theorem}{-1}
\endgroup

\begin{proof}
Consider a graph $G$ with two components: $K$ which consists of 2 connected vertices, and $S$ which is a star with $s+1$ nodes.  Let the first group $C_1$ have only one node in $K$.  All remaining nodes belong to the second group $C_2$, including one node $x_1$ in $K$ and the central node of the star $x_2$.  We have $k=1$ seed.

\begin{SCfigure}[][h]
\centering
\includegraphics[width=0.4\columnwidth]{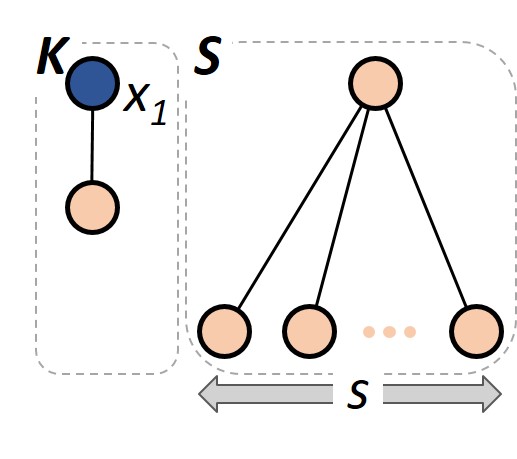}
\caption*{Example Undirected Network with Unbounded PoF under Maximin Fairness}
%\label{fig:undirected-example}
\end{SCfigure}

It is clear that the optimal seeding configuration is to seed $x_2$, which gives $\Infl^\OPT=1+ps$.  However, this is not a maximin fair seeding, as $C_1$ receives 0 influence.  Instead, seeding $x_1$ is maximin fair, giving $C_1$ $p$ influence and $C_2$ $1$ influence, giving a maximin utility $U^\Maximin(\{x_1\}) = \min(p,\frac{1}{s+2})$.  In this case, $\Infl^\Maximin = 1+p$.

As $s \to \infty$, $PoF_\Maximin = \frac{1+ps}{1+p}$ becomes unboundedly large.
\end{proof}

\begingroup
\def\thetheorem{\ref{thm:not-submod}}
\begin{theorem}
$U^\Maximin$ and $U^\Rational$ are not submodular.
\end{theorem}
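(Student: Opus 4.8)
The plan is to exhibit explicit counterexamples, since both utilities are obtained by aggregating per-group influence through a minimum over groups, and a minimum of submodular functions is in general not submodular. Each per-group influence $\Infl_i$ is itself monotone and submodular (the standard cascade-model fact), so the failure must come entirely from the $\min$ aggregation. Recall that a set function $f$ is submodular iff it exhibits diminishing marginal returns: for every $A \subseteq B$ and every $x \notin B$,
\[
f(A \cup \{x\}) - f(A) \ge f(B \cup \{x\}) - f(B).
\]
I would refute this by constructing a tiny instance in which the marginal value of a seed \emph{grows} as the seed set grows, because that seed only becomes useful once the group it serves has become the binding (worst-off) group in the minimum.

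For $U^{\Maximin}$, I would place two groups $C_1, C_2$ on two disjoint components, with a dedicated seed in each: a node $y$ whose reachable set lies entirely in $C_1$ and a node $x$ whose reachable set lies entirely in $C_2$. Taking $A = \emptyset$ and $B = \{y\}$, I would compute the four relevant values. Under $\emptyset$ and under $\{x\}$ alone the worst-off group still receives $0$ influence, so $x$ has marginal gain $0$ over $\emptyset$; but once $y$ is seeded, adding $x$ lifts the previously-starved group $C_2$ and raises the minimum, so $x$ has strictly positive marginal gain over $\{y\}$. This yields $U^{\Maximin}(\{x,y\}) - U^{\Maximin}(\{y\}) > U^{\Maximin}(\{x\}) - U^{\Maximin}(\emptyset)$, violating the inequality above. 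The only computation needed is to confirm, in the propagation model with parameter $p$, that the two per-group influences are strictly positive; note the conclusion holds regardless of the group-size normalization, since every term except the last is pinned to $0$ by the worst-off group.

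For $U^{\Rational}$, the same complementarity is the engine: the rational utility is again a minimum over groups of a (reference-normalized) per-group influence, so the identical two-component construction applies once the seed reaches are rescaled to match the per-group reference quantities. The main obstacle, and the only place warranting care, is the normalization: I must verify that whatever reference value $U^{\Rational}$ divides each group's influence by does not itself vary with the seed set in a way that cancels the jump in the minimum. I expect to handle this by choosing the two components symmetric, so that the two groups share identical references and $U^{\Rational}$ reduces to a pure $\min$ of two equal-weight influence terms, making the counterexample structurally identical to the maximin case.
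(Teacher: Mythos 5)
Your maximin counterexample is correct and is essentially the paper's own argument: both constructions exploit the fact that the marginal value of a seed is zero while the \emph{other} group is the binding argument of the min, and becomes strictly positive once that group has been lifted. Your version, with $A=\emptyset$ and $B=\{y\}$ on two disjoint components, is if anything cleaner than the paper's four-node instance, and your observation that the group-size normalization is irrelevant (every term except the last is pinned to $0$ by the worst-off group) is sound.

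The gap is in the second half. $U^\Rational$ is not a minimum over groups of reference-normalized per-group influences. The paper defines it as $U^\Rational(A) = \Infl_G(A)$ if the group-rational constraints are satisfied, and $U^\Rational(A) = 0$ otherwise: total influence gated by a feasibility indicator, not a min. Consequently your plan to symmetrize the two components so that ``the two groups share identical references'' and the utility ``reduces to a pure min of two equal-weight influence terms'' does not go through --- there is no per-group normalization to equalize, and no min structure to reduce to. The repair is fortunately easy and reuses your instance: because the utility is identically $0$ at any set violating some group's constraint, a seed serving only one group has zero marginal value over any infeasible set, but strictly positive marginal value over a set that it completes to feasibility. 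This is exactly how the paper argues: on its four-node graph with $k=4$, the constraints force all vertices to be seeded, so $U^\Rational(\{a,b\}) = U^\Rational(\{a,b,c\}) = U^\Rational(\{a,b,x\}) = 0$ while $U^\Rational(\{a,b,c,x\}) > 0$, violating diminishing returns. You should replace the min-based reasoning for $U^\Rational$ with this constraint-satisfaction argument; as written, that half of your proof rests on a definition the paper does not use.
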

\addtocounter{theorem}{-1}
\endgroup

We divide the proof of this theorem into two parts:

\begin{conjecture} \label{thm:not-submod-partA}
Maximin utility $U^\Maximin$ is not submodular.
\end{conjecture}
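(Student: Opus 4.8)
The plan is to exploit the structure of $U^\Maximin$ as a pointwise minimum of group-wise influence functions. Even though the expected influence received by any single group is a monotone, submodular function of the seed set (the standard property of the underlying diffusion process), a minimum of submodular functions need not be submodular, and this is precisely the source of the failure. I would therefore build a small instance in which the identity of the \emph{bottleneck} group changes as seeds are added, so that a seed which is useless when added to the empty set becomes valuable when added to a nonempty set. First I would recall the submodularity condition: $U^\Maximin$ is submodular iff $U^\Maximin(A\cup\{x\})-U^\Maximin(A)\ge U^\Maximin(B\cup\{x\})-U^\Maximin(B)$ for all $A\subseteq B$ and $x\notin B$. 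To refute it, it suffices to exhibit a single triple $(A,B,x)$ for which the marginal gain of $x$ is strictly larger on the larger set $B$.

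Next I would construct a graph consisting of two disjoint edges, $a$--$u_1$ and $b$--$u_2$, placing both endpoints of the first edge in group $C_1$ and both endpoints of the second edge in group $C_2$. Because the two edges lie in different connected components, seeding $a$ spreads influence only within $C_1$ and seeding $b$ only within $C_2$. I would then evaluate $U^\Maximin$ on the four seed sets $\emptyset$, $\{a\}$, $\{b\}$, and $\{a,b\}$. Each singleton leaves one of the two groups with zero influence, so the governing minimum is $0$ on $\emptyset$, $\{a\}$, and $\{b\}$; seeding both edges, however, makes both group-wise influences strictly positive, so $U^\Maximin(\{a,b\})>0$. (Computing the exact values, e.g.\ a per-group fraction of $(1+p)/2$, is routine and not needed beyond the sign.)

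Finally I would take $A=\emptyset$, $B=\{a\}$, and $x=b$. The marginal gain of $b$ on $A$ is $U^\Maximin(\{b\})-U^\Maximin(\emptyset)=0$, while its marginal gain on $B$ is $U^\Maximin(\{a,b\})-U^\Maximin(\{a\})>0$, which directly violates the submodular inequality. The only delicate point is arranging that every single seed genuinely keeps the bottleneck group at zero; the disconnected-component construction guarantees this, since no influence --- not even a seed's own group membership --- can cross between the two groups, so I need only verify that each seed belongs to and reaches exactly one group. Note also that the argument is independent of the seed budget, as submodularity is a property of the set function over all subsets, and the witnessing sets here have size at most two.
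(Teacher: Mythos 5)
Your proof is correct and takes essentially the same approach as the paper: a small instance in which the bottleneck group switches identity, so that a seed benefiting one group has zero marginal value on a smaller set (the other group remains stuck at its minimum) but strictly positive marginal value on a larger set where the other group has already been lifted. The only cosmetic differences are your choice of witnessing sets ($A=\emptyset$, $B=\{a\}$, adding $b$) and the use of two disjoint edges rather than the paper's four isolated-role nodes with $A=\{a,b\}$, $B=\{a,b,c\}$; the underlying violation is identical.
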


\begin{proof}

Let us consider a graph with 4 nodes $\{x,a,b,c\}$ where $\{x,a\}$ form community $C_1$ and $\{b,c\}$ form community $C_2$.  Let $A = \{a,b\}$ and $B= \{a,b,c\}$ be two possible seeding configurations.

% Couldn't get wrapfig to work, so I'm faking it by putting the text in as a caption*
\begin{SCfigure}[][h]
\centering
\includegraphics[width=0.4\columnwidth]{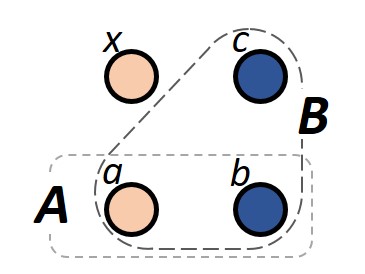}
\caption*{Example showing non-submodularity of Maximin Fairness}
\end{SCfigure}

Notice that $C_1$ receives $1$ influence in both configurations, which is weakly less than the influence received by $C_2$, and so, $U^\Maximin(A) = U^\Maximin(B) = 1/2$.

Now, consider adding $x$ to the $A$ and $B$. $U^\Maximin(A\cup\{x\})=1/2$ since $C_2$ remains incompletely seeded.  But $U^\Maximin(B\cup\{x\})=1$ since both groups are fully seeded.
\end{proof}

\begin{conjecture}
Group rational utility $U^\Rational$ is not submodular.
\end{conjecture}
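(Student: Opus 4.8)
The plan is to mirror the structure of the proof of the maximin case (Conjecture~\ref{thm:not-submod-partA}). Recall that a set function $f$ is submodular iff for all $A \subseteq B$ and $x \notin B$ we have $f(A \cup \{x\}) - f(A) \ge f(B \cup \{x\}) - f(B)$; to refute submodularity it suffices to exhibit a single triple $(A,B,x)$ with $A \subseteq B$, $x \notin B$, and \emph{strictly increasing} marginal return, i.e.\ $f(A \cup \{x\}) - f(A) < f(B \cup \{x\}) - f(B)$. As in the maximin case, the source of non-submodularity will be the minimization over groups embedded in $U^\Rational$: the group-rational objective is bottlenecked by the worst-off group, so adding a single seed to a larger configuration can switch which group is binding, producing a jump in utility that the corresponding smaller configuration cannot realize.

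Concretely, I would first try to reuse the two-community instance from the previous proof: a graph on $\{x,a,b,c\}$ with $C_1 = \{x,a\}$ and $C_2 = \{b,c\}$, the nested seed sets $A = \{a,b\} \subseteq B = \{a,b,c\}$, and the candidate element $x$. The design goal is to arrange the parameters so that, under the group-rational normalization, $C_2$ stays the binding (minimizing) group after adding $x$ to $A$, so that $U^\Rational(A \cup \{x\}) = U^\Rational(A)$ and the marginal gain on the small set is $0$; whereas in $B$ the extra seed $c$ has already saturated $C_2$, so adding $x$ benefits the now-binding group $C_1$ and yields $U^\Rational(B \cup \{x\}) > U^\Rational(B)$. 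A zero marginal gain on $A$ against a strictly positive marginal gain on $B$ then violates submodularity. The verification reduces to computing the four values $U^\Rational(A)$, $U^\Rational(A \cup \{x\})$, $U^\Rational(B)$, $U^\Rational(B \cup \{x\})$ and checking the strict inequality.

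The main obstacle, relative to the maximin argument, is the extra normalization inside $U^\Rational$: because each group's influence is weighed against its stand-alone (group-rational) benchmark rather than simply against its size, the arithmetic that keeps $C_2$ binding for $A \cup \{x\}$ but makes $C_1$ binding for $B \cup \{x\}$ is more delicate, and the clean fractions of the maximin example may no longer fall out. The bulk of the work is therefore to pin down the group sizes and any edge/propagation parameters so that the binding group switches \emph{exactly} across the inclusion $A \subseteq B$. If the original four-node gadget does not cooperate under the ratio normalization, I would enlarge one community or insert a low-probability edge so that the stand-alone benchmarks separate the two regimes as required; once those parameters are fixed, the remaining computation is routine.
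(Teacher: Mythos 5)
There is a genuine gap here, and it stems from a mischaracterization of the utility function. You treat $U^\Rational$ as a bottleneck-style objective --- a minimum over groups of influence normalized by a per-group benchmark --- and build your whole plan around making the ``binding group'' switch between $A \cup \{x\}$ and $B \cup \{x\}$. But that is not what $U^\Rational$ is: by definition it equals $\Infl_G(A)$ (the \emph{total} influence) when the group-rational constraints are satisfied, and $0$ otherwise. It is an indicator-gated sum, not a normalized minimum, so there is no binding group to switch, and the ``delicate arithmetic'' you anticipate --- tuning group sizes or inserting low-probability edges so the ratios cooperate --- is solving a problem that does not exist. Compounding this, your proposal never actually completes: you defer the essential step (``pin down the group sizes and any edge/propagation parameters'') and never compute the four values $U^\Rational(A)$, $U^\Rational(A\cup\{x\})$, $U^\Rational(B)$, $U^\Rational(B\cup\{x\})$, so what you have is a plan whose guiding mechanism is wrong, not a proof.

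Your instinct to reuse the four-node gadget with $A = \{a,b\} \subseteq B = \{a,b,c\}$ and candidate element $x$ is exactly right --- that is the paper's instance --- but the correct exploit is the all-or-nothing feasibility gate, not group normalization. The paper sets $k = 4$ (the full vertex count), so each group's stand-alone benchmark is full coverage of its own members, and the group-rational constraints can only be satisfied by seeding \emph{every} vertex. Consequently $U^\Rational(A) = U^\Rational(B) = U^\Rational(A \cup \{x\}) = 0$ because none of these sets is all of $\{x,a,b,c\}$, while $U^\Rational(B \cup \{x\}) = \Infl_G(\{x,a,b,c\}) > 0$. This gives zero marginal gain on the small set and strictly positive marginal gain on the large set, violating submodularity with no parameter tuning at all. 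The lesson: once the utility is recognized as $\Infl_G$ gated by a feasibility indicator, non-submodularity comes from the discontinuous jump at the feasibility threshold, which a single well-chosen budget $k$ exposes immediately.
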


\begin{proof}

Recall the definition of group rational utility: 

\begin{align*}
U^\Rational(A) =~
\begin{cases}
	\Infl_G(A), &if~constraints~satisfied \\
	0, &otherwise.
\end{cases}	
\end{align*}

Let us consider the same graph as in Conjecture~\ref{thm:not-submod-partA} with 4 nodes $\{x,a,b,c\}$ where $\{x,a\}$ form community $C_1$, and $\{b,c\}$ forms community $C_2$.  $k=4$ seeds are available, and so therefore the group rational constraints are only satisfied by seeding all vertices.

Let $A=\{a, b\}$ and $B=\{a,b,c\}$.  It is easy to verify that $U^\Rational(A)=U^\Rational(B)=U^\Rational(A \cup \{x\}) = 0$ since none of these satisfy all group rational constraints.  However, $U^\Rational(B \cup \{x\}) > 0$, and so therefore $f(A \cup \{x\}) - f(A) < f(B \cup \{x\}) - f(B)$ for $A \subseteq B$, which contradicts the definition of submodularity. 

\end{proof}

%We observe that allowing nodes to be labeled as being part of multiple communities can allow a fair seeding configuration to influence strictly more nodes.  In particular, consider any network $G = (V,E)$, labelled with disjoint communities $C_1, ... C_m$ where $\Infl_\OPT < \Infl_{Fair}$.  If we modify $G$ by labelling all nodes as belonging to all communities (i.e. $C_1 = ... = C_m = V$), the optimal allocation is automatically fair.

\begingroup
\def\thetheorem{\ref{thm:maximin-overlap}}
\begin{theorem}
Give graphs $G$ with groups $C_1$ and $C_2$, and $G'$ with groups $C'_1$ and $C'_2$, where $G'=G$ $C'_1 = C_1$ and $C'_2$ is obtained from $C_2$ by the addition of one vertex $x_1$ ($x_1 \in C_1$, $x_1 \notin C_2 $.  It is possible for 
$\lim\limits_{\substack{ n \to \infty \\ p \to \frac{1}{3}^{-}}} 
\frac{\PoF^\Maximin_{G'}}{\PoF^\Maximin_G} \to \infty$.
\end{theorem}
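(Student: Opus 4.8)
The plan is to construct a single family of instances, indexed by the star size $n$ and spread probability $p$, for which $\PoF^\Maximin_G = 1$ while $\PoF^\Maximin_{G'}$ grows like $pn$, so that the quotient diverges. The mechanism is to place the high-influence part of the graph \emph{outside} both groups: the maximin-fair seed is then allowed to ignore it, and in $G$ it happens to coincide with the influence-optimal seed, whereas once $x_1$ enters $C_2$ the fair seed is forced onto a vertex with trivial spread. Concretely I take a star with centre $c$ and leaves $\ell_1,\dots,\ell_n$, all lying in no group (a pure ``influence reservoir''), together with three further vertices $a,b,x_1$, edges $c\ell_i,\ ca,\ cb$, and $x_1$ isolated. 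I set $C_1=\{a,x_1\}$ and $C_2=\{b\}$, so that $C_1'=\{a,x_1\}$ and $C_2'=\{b,x_1\}$, and take $k=1$ seed with the independent-cascade spread used in the proof of Theorem~\ref{thm:maximin_pof}. I assume, as is standard when defining a price of fairness, that $\Infl$ counts every activated vertex while each group's utility is its own activated fraction; in particular a vertex may belong to no group.

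First I would evaluate $G$. Seeding $c$ activates the whole reservoir and reaches $a$ and $b$ with probability $p$ each, so $\Infl^\OPT=1+pn+2p$, and no other single seed reaches the $n$ leaves except through $c$ (probability $O(p^2)$), so $c$ is influence-optimal. For the maximin value I would tabulate the minimum group-fraction of each seed: $c$ gives $C_1$-fraction $p/2$ and $C_2$-fraction $p$, hence $p/2$; $a$ gives $\min(1/2,p^2)=p^2$; $b$ and each $\ell_i$ give $O(p^2)$; and the isolated $x_1$ gives $C_2$-fraction $0$. Since $p/2>p^2$ for $p<1/2$, and the stated regime $p\to 1/3^{-}$ lies safely inside this range, the seed $c$ is simultaneously maximin-optimal and influence-optimal, so $\Infl^\Maximin=\Infl^\OPT$ and $\PoF^\Maximin_G=1$.

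Next I would show that moving $x_1$ into $C_2$ flips the maximin optimum. The only change is that $x_1\in C_2'$, and because $x_1$ is isolated no reservoir-reaching seed touches it: seeding $c$ now yields $C_2'$-fraction $(p+0)/2=p/2$. But seeding $x_1$ activates a vertex lying in \emph{both} $C_1'$ and $C_2'$, giving each of them fraction $1/2>p/2$, which strictly dominates $c$ and every reservoir-reaching seed. Hence the maximin-fair seed is $x_1$, whose cascade is the single vertex $x_1$, so $\Infl^\Maximin_{G'}=1$; as $\Infl^\OPT=1+pn+2p$ is unchanged, $\PoF^\Maximin_{G'}=1+pn+2p=\Theta(pn)$. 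Combining the two computations, $\PoF^\Maximin_{G'}/\PoF^\Maximin_G=1+pn+2p\to\infty$ as $n\to\infty$.

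The step I expect to be the real work is certifying maximin-optimality, i.e.\ that no seed beats $c$ in $G$ and none beats $x_1$ in $G'$; each reduces to a finite comparison over the five seed types $\{c,a,b,\ell_i,x_1\}$, but the binding inequality $p/2>p^2$ is exactly what pins the admissible range of $p$, and the crossing point of the two relevant min-fractions is what the hypothesis $p\to 1/3^{-}$ records (the precise constant can be relocated to $1/3$ by enlarging $C_2$ or replacing the isolated $x_1$ with a short pendant path, at the cost of messier arithmetic). The one genuinely structural point to verify is that the reservoir is group-free, so that its $\Theta(pn)$ influence can never raise the maximin objective and the fair seed is free to abandon it; this is precisely what decouples $\Infl^\Maximin_{G'}$ from $\Infl^\OPT$ and drives the blow-up.
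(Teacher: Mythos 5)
Your computations are internally consistent: given your modeling assumption, the tables of group fractions are correct, $c$ is both influence-optimal and maximin-optimal in $G$, the isolated $x_1$ becomes the unique maximin seed in $G'$, and the ratio $\PoF^\Maximin_{G'}/\PoF^\Maximin_G = 1+pn+2p$ indeed diverges. The problem is the assumption itself: your entire construction rests on the star (center $c$ and its $n$ leaves) belonging to \emph{no} group, so that the maximin objective is blind to the $\Theta(pn)$ nodes that carry essentially all of the influence. That is not the paper's model. In every construction the paper gives, the groups cover the vertex set: in Theorem~\ref{thm:maximin_pof} ``all remaining nodes belong to the second group $C_2$'' --- the star is deliberately placed inside $C_2$, which is exactly why the fair utility there is $\min(p,\frac{1}{s+2})$ rather than being trivially decoupled from total influence; in the non-submodularity proofs the four vertices are precisely $C_1\cup C_2$; and in the theorem at hand $C_2$ is defined to be the $s+t+1$ ``remaining vertices.'' If group-free ``influence reservoirs'' were admissible, Theorem~\ref{thm:maximin_pof} itself would be a two-line observation, and the paper's care in embedding the star inside a group would be pointless. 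So what you have proved is the statement in a strictly more permissive model than the one the statement quantifies over, and the permissiveness is doing all the work.

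The gap is not cosmetic, because the construction does not survive assigning the reservoir to a group. If $\{c,\ell_1,\dots,\ell_n\}\subseteq C_2$, then in $G'$ seeding $x_1$ gives $C'_2$-fraction $\frac{1}{n+3}\to 0$ while seeding $c$ gives $C'_2$-fraction roughly $p$ and $C'_1$-fraction $p/2$, so the maximin seed remains $c$ and both prices equal $1$; placing the reservoir in $C_1$ fails symmetrically. The genuine difficulty of the theorem --- the one the paper's proof is organized around --- is that the high-influence region must sit in some group's denominator, so a fair solution can never simply abandon it; unbounded damage has to be engineered through a comparison between components that are all covered by groups. That is the paper's route: two star components with centers $x_1$ and $x_2$, groups covering every vertex, and the star sizes tuned against each other via $t=\frac{s}{1-3p}$ (this tuning, not a $p<\frac{1}{2}$ threshold, is where the constant $\frac{1}{3}$ in the statement comes from), so that moving a single vertex into a second group flips the maximin optimum from one star center to the other. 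Your reservoir trick removes precisely this tension --- which is why your argument is so short, and also why it does not establish the theorem in the paper's setting.
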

\addtocounter{theorem}{-1}
\endgroup

\begin{proof}
Consider a graph $G$ with two star components: $S_1$ with $s+1$ vertices with a central node $x_1$, and $S_2$ with $t+2$ vertices with central node $x_2$ ($s>t$).  There are two groups: $C_1$ contains 2 vertices, $x_1$ and a non-central node from $S_2$; $C_2$ contains $s+t+1$ remaining vertices, including $x_2$.  There is one seed ($k=1$), and a total of $n=s+t+3$ nodes.

\begin{figure}[h]
\centering
\begin{minipage}{0.49\columnwidth}
\centering
\includegraphics[width=0.95\columnwidth]{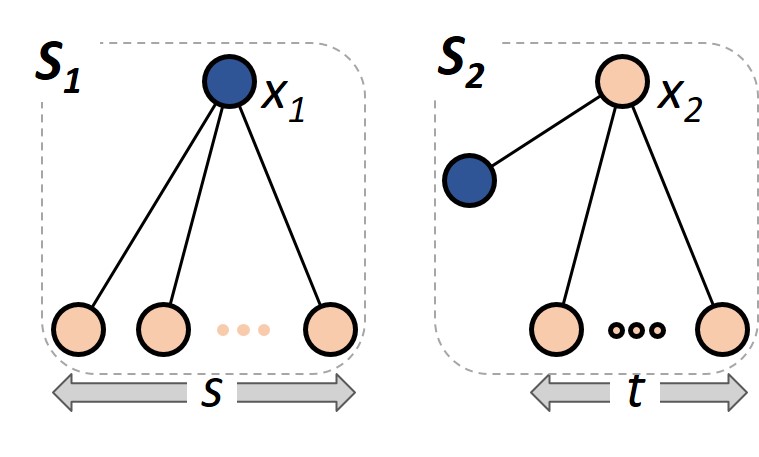}
\caption*{$G$ with Disjoint Groups.}
\end{minipage}
%\hspace{0.2cm}
\begin{minipage}{0.49\columnwidth}
\centering
\includegraphics[width=0.95\columnwidth]{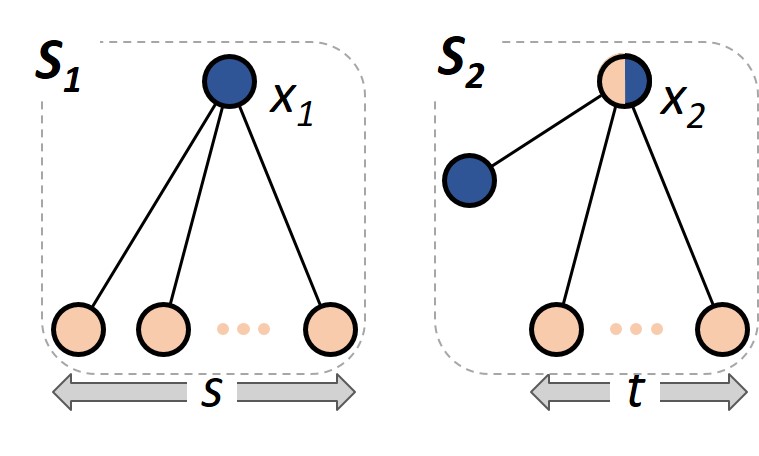}
\caption*{$G'$ with Overlapping Groups.}
\end{minipage}
\end{figure}

It is easy to see that the Maximin configuration is to seed $x_1$, which gives $C_1$ $1$ influence, and $C_2$ $ps$ influence.  This gives a Maximin influence $\Infl^\Maximin_G = 1+ps$.\footnote{We do not need to calculate $U^\Maximin$ explicitly at any point in this proof as it is not required for the proof to work.}

Now, consider a modified graph $G'=G$, but with our groups modified by allowing $x_2$ to belong to both communities.  That is, $C'_1 = C_1$ and $C'_2 = C_2 \cup \{x_2\}$.  The Maximin configuration has two possibilities: either $\{x_1\}$ remains the Maximin configuration, or $\{x_2\}$ becomes the new Maximin configuration.  In order for the latter case to be true, seeding $\{x_2\}$ must provide higher proportional influence to the least well-off group than seeding $\{x_1\}$.

Seeding $\{x_2\}$ generates $\frac{1+p}{3}$ influence for $C_1$, and $\frac{1+pt}{s+t+1}$ influence for $C_2$.  Seeding $\{x_1\}$ generates $\frac{1}{3}$ influence for $C_1$, and $\frac{ps}{s+t+1}$.  It can be shown that for $p<\frac{1}{3}$ and $t=\frac{s}{1-3p}$, these conditions are satisfied and $\{x_2\}$ is the Maximin configuration, generating a total of $\Infl^\Maximin_{G'} = 1+p(t+1)$.

Then,
\begin{align*}
\lim_{n \to \infty} \frac{\PoF^{\Maximin}(G')}{\PoF^{\Maximin}(G)}
&= \lim_{n \to \infty} \frac{\Infl^{\Maximin}(G)}{\Infl^{\Maximin}(G')} \\
&= \lim_{n \to \infty} \frac{1+sp}{1+p(t+1)} \\
&= \lim_{n \to \infty} \frac{1+sp}{1+p(\frac{s}{1-3p}+1)} \\
&= 1-3p
\end{align*}

And therefore, as $p \to \frac{1}{3}^-$, i.e. $p$ approaches $\frac{1}{3}$ \emph{from the left}, the addition of a node to a second group may cause the Price of Maximin Fairness to worsen by an arbitrarily large amount.
\end{proof}

	\section{Analysis of multiobjective submodular maximization problem}
	
	Consider a collection of monotone submodular functions $f_1...f_m$ with corresponding multilinear extensions $F_1...F_m$. We will assume that the maximum singleton value of any item in the ground set $V$ is bounded as $f_i{\{v\}} \leq b$ for all $i \in [m], v \in V$. Suppose that we are given a target value $W_i$ for each $f_i$ and would like to find a set $S$ with $|S| \leq k$ which guarantees $f_i(S) \geq W_i$ for all $i$. We are promised that such an $S$ exists. We will give an approximation algorithm for this problem which improves in terms of both runtime and approximation ratio on the best current algorithms, given by Udwani \cite{udwani2018multi}, who in turn build on the work of Chekuri et al.\ \cite{chekuri2010dependent}.

	Our algorithm follows the overall template of \cite{udwani2018multi}, which carries out three steps (given a precision level $\epsilon$). 
	
	\begin{enumerate}
		\item Make a pass over the ground set, maintaining a set $S_1$. Add to $S_1$ every item which has value at least $\epsilon^3 W_i$ for some $f_i$. 
		\item Define $\polym$ to be the uniform matroid polytope for budget $k - |S_1|$. Use a subroutine to find a point $x \in \polym$ satisfying $F_i(x|x_{S_1}) \geq \alpha \left(W_i - f_i(S_1)\right) - \epsilon$ for all $i$ and some approximation ratio $\alpha$. This is the key step where we improve the runtime and approximation ratio. 
		\item Round $x$ to a set $S_2$ using the swap rounding algorithm of \cite{chekuri2010dependent} Output $S_1 \cup S_2$. 
	\end{enumerate}

	Our primary technical contribution is an algorithm for the second step which guarantees $\alpha = \left(1 - \frac{1}{e}\right)$. It uses access to three kinds of stochastic oracles for the functions and their multilinear extensions:

% 	The runtime is dominated by $O\left(\frac{bk^3\left(c_{\text{grad}} + c_{\text{item}}\right) \log\left(n + m\right)}{\epsilon^5}\right)$ total calls to the oracles $\mathcal{A}^i_{\text{grad}}$ and $\mathcal{A}^j_{\text{item}}$, with $O(n + m)$ additional overhead. While the runtime is not directly comparable to the MWU algorithm of Udwani et al.\ due to the use of different oracles to access the functions, for most common problems the dependence on $n$ and $m$ is at least as good or better, while the approximation ratio improves from  $\left(1 - \frac{1}{e}\right)^2$ to $\left(1 - \frac{1}{e}\right)$. 

% 	\begin{enumerate}
% 		\item Initialize $x^0 = 0$
% 		\item For steps $t = 1...T$:
% 		\begin{enumerate}
% 			\item Find the solution to the problem 
			
% 			\begin{align*}
% 			v = \arg\max_{v \in \polym} \min_{i = 1...m} v \cdot \nabla F_i(x^{t-1})
% 			\end{align*}
			
% 			by running stochastic saddle point mirror descent using the stochastic gradient oracles discussed below.
			
% 			\item Set $x^t = x^{t-1} + \frac{1}{T}v$
			
% 		\end{enumerate}
% 		\item Return $x^T$.
% 	\end{enumerate}

	\begin{enumerate}
		\item A stochastic value oracle for singletons $\mathcal{A}^i_{\text{val}}$ corresponding to each $f_i$. Given an item $v$, this oracle returns a value $\mathcal{A}^i_{\text{val}}(v)$ with $\E\left[\mathcal{A}^i_{\text{val}}(S)\right] = f_i(\{v\})$ and $\text{Var}\left[\mathcal{A}^i_{\text{val}}(S)\right] \leq c_{\text{val}}$.
		\item A stochastic gradient oracle $\mathcal{A}^i_{\text{grad}}$ for each multilinear extension $F_i$. Given a point $x \in \polym$, $\mathcal{A}^i_{\text{grad}}(x)$ satisfies $\E\left[\mathcal{A}^i_{\text{grad}}\right] = \nabla_x F_i(x)$ and $\norm{\mathcal{A}^i_{\text{grad}}(x)}_\infty \leq c_{\text{grad}}$
		\item A stochastic gradient oracle $\mathcal{A}^j_{\text{item}}$ corresponding to each item $j \in [n]$. Given a point $x \in \polym$,  $\mathcal{A}^j_{\text{item}}(x)$ satisfies $\E\left[\mathcal{A}^j_{\text{item}}(x)\right] = \left[\nabla_{x_j} F_1(x) ... \nabla_{x_j} F_m(x)\right]$ and $\norm{\mathcal{A}^j_{\text{item}}(x)}_\infty \leq c_{\text{item}}$. Note that this can be simulated from the above oracle, but may sometimes admit more efficient implementations. 
	\end{enumerate}

% 	Note that these oracles can be implemented efficiently for influence maximization, in time $O(n)$ for $\mathcal{A}^i_{\text{val}}$ and $O(n^2)$ for the second two, with all $c_\cdot \leq b$. 
	
	We now analyze this algorithm. We start by recalling a technical lemma on the smoothness of the multilinear extension:
	
	\begin{lemma}[Hassani et al. \cite{hassani2017gradient}, Lemma C.1] \label{lemma:smooth}
		For any monotone submodular set function $f$ and its multilinear extension $F$, $||\nabla F(x) - \nabla F(y)||_\infty \leq b||x - y||_1$ where $b = \max_{v \in V} f(\{v\})$. That is, $F$ is $b$-smooth with respect to the $\ell_1$ norm. 
	\end{lemma}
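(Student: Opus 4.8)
The plan is to bound the entries of the Hessian $\nabla^2 F$ uniformly by $b$ and then integrate along the segment joining $x$ and $y$. First I would recall that the multilinear extension is $F(x) = \E_{S \sim x}[f(S)]$, where $S \sim x$ includes each $v \in V$ independently with probability $x_v$; equivalently $F(x) = \sum_{S \subseteq V} f(S) \prod_{i \in S} x_i \prod_{i \notin S}(1 - x_i)$. Because $F$ is multilinear---affine in each coordinate separately---differentiating in $x_j$ gives the clean form $\partial F / \partial x_j = \E_{S}[f(S \cup \{j\}) - f(S)]$, the expected marginal value of $j$ over a random set $S$ drawn from the coordinates other than $j$.

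Next I would compute the second partials. Multilinearity immediately forces $\partial^2 F / \partial x_j^2 = 0$, so the diagonal of the Hessian vanishes. For $j \neq \ell$, differentiating the marginal again yields the discrete second difference $\partial^2 F / \partial x_j \partial x_\ell = \E_{S}[f(S \cup \{j,\ell\}) - f(S \cup \{j\}) - f(S \cup \{\ell\}) + f(S)]$, where $S$ now ranges over the coordinates other than $j$ and $\ell$.

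The key step, which I expect to be the main obstacle, is bounding this cross term in absolute value by $b$. Submodularity makes the bracketed quantity pointwise nonpositive, so its magnitude equals $\E_{S}[(f(S \cup \{\ell\}) - f(S)) - (f(S \cup \{j,\ell\}) - f(S \cup \{j\}))]$. Monotonicity makes the subtracted marginal gain nonnegative, so this is at most $\E_{S}[f(S \cup \{\ell\}) - f(S)]$; a final appeal to submodularity together with $f(\emptyset) = 0$ bounds each such marginal gain by $f(\{\ell\}) \leq b$. Hence $\lvert \partial^2 F / \partial x_j \partial x_\ell \rvert \leq b$ for every pair $j, \ell$.

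Finally I would invoke the fundamental theorem of calculus along $t \mapsto y + t(x - y)$ to write $\nabla F(x) - \nabla F(y) = \int_0^1 \nabla^2 F(y + t(x-y))(x - y)\, dt$. Taking the $j$-th coordinate, bounding each Hessian entry by $b$, and summing over $\ell$ gives $\lvert [\nabla F(x) - \nabla F(y)]_j \rvert \leq \int_0^1 \sum_\ell b\,\lvert x_\ell - y_\ell \rvert\, dt = b \lVert x - y \rVert_1$, uniformly in $j$. Taking the maximum over $j$ yields the claimed $\ell_\infty$--$\ell_1$ smoothness bound.
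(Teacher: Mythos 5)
Your proof is correct, but note that the paper itself contains no proof of this statement: it is imported as a black box from Hassani et al.\ \cite{hassani2017gradient} (their Lemma C.1), so the only comparison available is with the cited reference, whose argument is essentially the one you give. Your route --- writing the first partials as expected marginal gains, noting that multilinearity kills the diagonal Hessian entries, showing each off-diagonal entry $\E_S\left[f(S\cup\{j,\ell\}) - f(S\cup\{j\}) - f(S\cup\{\ell\}) + f(S)\right]$ is nonpositive by submodularity and at most $b$ in magnitude by monotonicity followed by submodularity, and then integrating the Hessian along the segment from $y$ to $x$ to obtain the $\ell_\infty$--$\ell_1$ bound --- is the standard proof of this smoothness fact. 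One detail you handled well: the step $f(S\cup\{\ell\}) - f(S) \leq f(\{\ell\}) \leq b$ genuinely requires $f(\emptyset) = 0$ (or at least $f(\emptyset) \geq 0$), a normalization the lemma statement omits but this literature assumes; without it the claim is false as stated (for instance, $f(\emptyset) = -1$ and $f(S) = 0$ for all nonempty $S$ is monotone submodular with $b = 0$, yet its multilinear extension has nonconstant gradient), so flagging that hypothesis explicitly is the right call.
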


	\begin{lemma}
		$F$ is $b$-Lipschitz in the $\ell_1$ norm. \label{lemma:lipschitz}
	\end{lemma}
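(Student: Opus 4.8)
The plan is to bound the Lipschitz constant of $F$ by a uniform bound on its gradient, exploiting the duality between the $\ell_1$ and $\ell_\infty$ norms. Concretely, I would first establish that $\norm{\nabla F(x)}_\infty \le b$ for every $x$ in the domain, and then pass from this pointwise gradient bound to the global Lipschitz estimate via H\"older's inequality along the line segment joining any two points.

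For the gradient bound, I would recall the standard formula for the partial derivatives of the multilinear extension: writing $S \sim x$ for the random set that includes each element $v$ independently with probability $x_v$, one has $\partial_v F(x) = \E_{S \sim x}\left[f(S \cup \{v\}) - f(S \setminus \{v\})\right]$, i.e. the expected marginal gain of $v$. Monotonicity of $f$ gives $\partial_v F(x) \ge 0$, while submodularity (diminishing returns) gives $f(S \cup \{v\}) - f(S \setminus \{v\}) \le f(\{v\}) - f(\emptyset) \le f(\{v\}) \le b$ for every realization of $S$, so that $0 \le \partial_v F(x) \le b$ for all coordinates $v$ and all $x$. Taking the maximum over coordinates yields $\norm{\nabla F(x)}_\infty \le b$.

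For the second step, I would fix $x, y$ and apply the fundamental theorem of calculus along the segment $\gamma(t) = y + t(x - y)$, $t \in [0,1]$, writing $F(x) - F(y) = \int_0^1 \langle \nabla F(\gamma(t)), x - y \rangle \, dt$. Bounding each inner product by H\"older's inequality, $|\langle \nabla F(\gamma(t)), x - y\rangle| \le \norm{\nabla F(\gamma(t))}_\infty \norm{x - y}_1 \le b \norm{x - y}_1$, and integrating over $t$ gives $|F(x) - F(y)| \le b \norm{x - y}_1$, which is exactly $b$-Lipschitzness in the $\ell_1$ norm.

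I expect the only real content to lie in the gradient bound, and specifically in correctly invoking submodularity to cap the expected marginal gain by the singleton value $b$; the passage from a bounded gradient to Lipschitzness is then a routine application of norm duality. Note that, unlike Lemma~\ref{lemma:smooth}, this argument uses only \emph{boundedness} of the gradient rather than its Lipschitz continuity, so the two lemmas are proved by genuinely different estimates.
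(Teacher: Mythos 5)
Your proof is correct and follows essentially the same route as the paper's: both bound the partial derivatives via the formula $\nabla_{x_j} F(x) = \E_{S\sim x}[f(S \cup \{j\}) - f(S \setminus \{j\})]$ and submodularity to get $\norm{\nabla F(x)}_\infty \leq b$, then conclude $\ell_1$-Lipschitzness. The only difference is that you spell out the final norm-duality/fundamental-theorem-of-calculus step, which the paper leaves implicit; this is a welcome addition of rigor but not a different argument.
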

	\begin{proof}
		Recall that $\nabla_{x_j} F(x) = \E_{S\sim x}[f(S \cup \{j\}) - f(S \setminus\{j\})]$ CITE, where $S\sim x$ denotes including each $j$ in $S$ independently with probability $x_j$. By submodularity, $\E_{S\sim x}[f(S \cup \{j\}) - f(S \setminus\{j\})] \leq f(\{j\}) \leq b$. Hence, $||\nabla_{x_j} F(x)||_\infty \leq b$ which proves the lemma.  
	\end{proof}
	
	Next, we show a guarantee for the output of mirror descent in step 2(a).

	\begin{lemma} \label{lemma:mirror}
		For some $x \in \polym$, suppose that there exists a $v^* \in \polym$ such that $v^* \cdot \nabla F_i(x) \geq W_i - F_i(x)$ for all $i = 1...m$. Then, S-SP-MD returns a $v$ satisfying $v \cdot \nabla F_i(x) \geq (1 - \epsilon)(W_i - F_i(x)) - \epsilon$ for all $i$ with probability $1-\delta$. There are $O\left(\frac{\left(c_{\text{grad}}\sqrt{k \log n} + kc_{\text{item}}\sqrt{\log n}\right)^2}{\epsilon^4} \log \frac{1}{\delta}\right)$ iterations, each requiring one call to oracles $\mathcal{A}^i_{\text{grad}}$ and $\mathcal{A}^j_{\text{item}}$ for some $i$ and $j$, and $O(n+m)$ additional work. 
	\end{lemma}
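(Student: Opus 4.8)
The plan is to recognize step 2(a) as approximately solving a bilinear saddle-point problem and to run the stochastic saddle-point mirror descent (S-SP-MD) on it. Writing $g_i := \nabla F_i(x)$ and $b_i := W_i - F_i(x)$, the goal is a $v \in \polym$ with $v \cdot g_i \geq b_i$ for every $i$; the hypothesis that such a $v^*$ exists says precisely that the value of
\[
\max_{v \in \polym}\ \min_{\lambda \in \Delta_m}\ \Phi(v,\lambda), \qquad \Phi(v,\lambda) := \sum_{i=1}^m \lambda_i\,(v\cdot g_i - b_i),
\]
is at least $0$, where $\Delta_m$ is the probability simplex over the $m$ objectives. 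Since $\Phi$ is bilinear, $\polym$ and $\Delta_m$ are convex and compact, and the partial gradients are $\nabla_v\Phi = \sum_i \lambda_i g_i$ and $\nabla_\lambda\Phi = (v\cdot g_i - b_i)_i$, this is a standard convex–concave saddle point to which mirror descent applies.

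First I would construct cheap, unbiased, bounded stochastic estimates of the two partial gradients from the given oracles. For $\nabla_v\Phi = \sum_i \lambda_i g_i$, sample an index $i \sim \lambda$ and return $\mathcal{A}^i_{\text{grad}}(x)$; this is unbiased and has $\ell_\infty$ norm at most $c_{\text{grad}}$. For $\nabla_\lambda\Phi$, whose $i$-th coordinate is $v\cdot g_i - b_i = \sum_j v_j\,\nabla_{x_j}F_i(x) - b_i$, I would importance-sample a coordinate $j$ with probability $v_j/\|v\|_1$ and return $\|v\|_1\,\mathcal{A}^j_{\text{item}}(x)$ (shifted by the deterministic vector $(b_i)_i$); this is unbiased for $\nabla_\lambda\Phi$, and since $\|v\|_1 \leq k$ on $\polym$ its random part has $\ell_\infty$ norm $O(k\,c_{\text{item}})$. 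Each iteration thus costs one call to each oracle plus $O(n+m)$ bookkeeping, matching the per-iteration claim.

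The core is then the convergence rate of S-SP-MD. I would equip the $v$-block with the negative-entropy mirror map on the budget-$k$ uniform-matroid polytope and the $\lambda$-block with the KL mirror map on $\Delta_m$, so that the $\ell_\infty$ gradient bounds pair with only logarithmic Bregman diameters. The standard averaged-iterate guarantee then gives an expected duality gap of order $C/\sqrt{T}$, where $C = O\!\left(c_{\text{grad}}\sqrt{k\log n} + k\,c_{\text{item}}\sqrt{\log n}\right)$ is the sum over the two blocks of (gradient-norm)$\times\sqrt{\text{diameter}}$ (the $v$-block contributing $c_{\text{grad}}\sqrt{k\log n}$ and the $\lambda$-block contributing $k c_{\text{item}}\sqrt{\log m}$, using $\log m \le \log n$). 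Taking $T = O(C^2/\epsilon^4)$ drives the gap below $O(\epsilon^2)$. Because the averaged primal iterate $v = \bar v$ is gap-suboptimal for the outer $\max_v \min_i$ problem and the saddle value is $\geq 0$, this yields $v\cdot g_i \geq b_i - O(\epsilon^2)$ for all $i$, which implies the stated $v\cdot g_i \geq (1-\epsilon)b_i - \epsilon$ (checking $b_i \geq 0$ directly, and $b_i < 0$ automatically since $v\cdot g_i \geq 0$ because $\nabla F_i \geq 0$ for monotone $F_i$, cf.\ Lemma~\ref{lemma:lipschitz}).

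Finally, to upgrade the in-expectation gap bound to the ``with probability $1-\delta$'' statement, I would use that the stochastic gradients are almost surely bounded in the relevant dual norms, so the martingale differences between the descent trajectory and its expectation are bounded; an Azuma–Hoeffding-type concentration argument then contributes the extra $\log\frac{1}{\delta}$ factor to $T$. I expect the main obstacles to be two bookkeeping-heavy points rather than a single hard idea: (i) constructing the $\lambda$-gradient estimator from the per-item oracle with the correct importance weights and verifying its range is $O(k c_{\text{item}})$, with the deterministic shift $(b_i)_i$ affecting only bias and not variance; and (ii) balancing the step sizes and the two distinct mirror geometries so that the combined gradient-norm–diameter constant collapses exactly to $c_{\text{grad}}\sqrt{k\log n} + k c_{\text{item}}\sqrt{\log n}$. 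The high-probability conversion is then routine once boundedness of the gradients is in hand.
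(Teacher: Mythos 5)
Your proposal is correct, but it solves a genuinely different saddle-point problem than the paper does at the key step. The paper first discards the indices with residual $W_i - F_i(x) < \epsilon$ and then solves the \emph{normalized} ratio problem $\max_{v \in \polym} \min_i \frac{v \cdot \nabla F_i(x)}{W_i - F_i(x)}$, whose value is promised to be at least $1$; you instead solve the unnormalized Lagrangian problem $\max_{v \in \polym} \min_{\lambda \in \Delta_m} \sum_i \lambda_i \left(v \cdot \nabla F_i(x) - (W_i - F_i(x))\right)$, whose value is at least $0$. The mechanics are otherwise the same: identical sampling tricks (draw $i \sim \lambda$ and call $\mathcal{A}^i_{\text{grad}}$; draw $j$ proportional to $v$ and call $\mathcal{A}^j_{\text{item}}$), identical entropy mirror maps and diameter bounds $\sqrt{k \log n}$ and $\sqrt{\log m}$, and your Azuma-based high-probability argument is exactly what the paper's cited Proposition 3.2 of Nemirovski et al.\ packages. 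The difference in formulation has real consequences, in your favor. The paper's normalization is precisely what forces the pre-filtering of near-satisfied constraints (otherwise dividing by $W_i - F_i(x)$ makes the stochastic gradients unbounded) and inflates both gradient bounds by $1/\epsilon$, which is where its $\epsilon^{-4}$ iteration count comes from. Your formulation needs no filtering, keeps the gradient bounds at $c_{\text{grad}}$ and $O(k\,c_{\text{item}})$, and an additive duality gap of $\epsilon$ already gives $\bar v \cdot \nabla F_i(x) \geq W_i - F_i(x) - \epsilon$ for \emph{all} $i$, which implies (and is stronger than) the stated $(1-\epsilon)(W_i - F_i(x)) - \epsilon$ bound, with the case $W_i - F_i(x) < 0$ handled by $\bar v \cdot \nabla F_i(x) \geq 0$ as you note. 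So $T = O\left(\left(c_{\text{grad}}\sqrt{k\log n} + k c_{\text{item}}\sqrt{\log n}\right)^2 \epsilon^{-2} \log\frac{1}{\delta}\right)$ iterations suffice; your inflation to $\epsilon^{-4}$ is needed only to match the lemma's stated count, not for correctness. One point you should tighten: the deterministic shift by $(W_i - F_i(x))_i$ in your $\lambda$-gradient estimator cannot be waved off as ``affecting only bias and not variance'' --- the mirror-descent regret and concentration bounds require a bound on the $\ell_\infty$ norm of the \emph{full} stochastic gradient, shift included. The shift is in fact controlled: the feasibility promise gives $W_i - F_i(x) \leq v^* \cdot \nabla F_i(x) \leq k b$, and $b \leq c_{\text{item}}$ (the oracle's almost-sure bound must dominate $\nabla_{x_j} F_i(0) = f_i(\{j\})$), so the constant remains $O(k\,c_{\text{item}})$; that short calculation should replace the dismissal.
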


	\begin{proof}
	
	    Our objective is to find a $v$ satisfying $v \cdot F_i(x) \geq (1 - \epsilon)(W_i - F_i(x)) - \epsilon$, under the guarantee that such a $v$ exists. Note that we call S-SP-MD only on the set of indices $\mathcal{I}$ where $W_i - F_i(x) \geq \epsilon$. For all other indices, where the current solution is already within $\epsilon$ of the target, monotonicity of the $F_i$ guarantees that  $v \cdot F_i(x) \geq 0 \geq W_i - F_i(x) - \epsilon$. 
	    
	    The feasibility problem on the groups in $\mathcal{I}$ is equivalent to solving maxmin problem 
	    
	    \begin{align*}
	        \max_{v \in \mathcal{P}} \min_{i \in \mathcal{I}} \frac{v \cdot \nabla F_i(x)}{W_i - F_i(x)}
	    \end{align*}
	    
	    To see this, let $OPT$ denote the optimal value for the maxmin problem; we are guaranteed $OPT \geq 1$. If we have $v$ with maxmin value at least $OPT - \epsilon$, then $v$ satisfies
	    
	    \begin{align*}
	        v \cdot \nabla F_i(x) \geq (1 - \epsilon)(W_i - F_i(x)) \quad \forall i \in \mathcal{I}
	    \end{align*}
		
		We now prove that S-SP-MD produces a $v$ with maxmin value at least $OPT-\epsilon$. Let $A$ be a matrix where column $i$ is $\frac{\nabla F_i(x)}{W_i - F_i(x)}$ for each $i \in \mathcal{I}$, and define $g(v, y) = v^\top A y$. Let $\Delta(\mathcal{I})$ be the $|\mathcal{I}|$-dimensional probability simplex. We would like to solve the problem 
		
		\begin{align*}
		\max_{v \in \polym} \min_{y \in \Delta(\mathcal{I})} g(v,y)
		\end{align*} 
		
		which is easily seen to be equivalent to the original maxmin problem. 
		
%		 We will assume access to an oracle which, given an input $(v, y)$, returns a stochastic gradient estimate $(\tilde{\nabla}_v, \tilde{\nabla}_y)$ that satisfies $\E[\tilde{\nabla}_v] = \nabla_v g(v,y)$ and $\E[\tilde{\nabla}_y] = \nabla_y g(v,y)$ with $\norm{\tilde{\nabla}_v}_\infty \leq c_{\text{grad}}$ and $\norm{\tilde{\nabla}_y}_\infty \leq c_{\text{item}}$.
		
		We will solve the above saddle point problem by running stochastic saddle point mirror descent with the negative entropy mirror map on the function $g$. We obtain stochastic estimates of $\nabla_v g(v,y)$ and $\nabla_y g(v,y)$ via calls to input the oracles. First, note that
		
		\begin{align*}
		    \nabla_v g(v,y) &= Ay  = \sum_{i \in \mathcal{I}} y_i A_{\cdot, i} = \E_{i \sim y}\left[A_{\cdot, i}\right]
		\end{align*}
		
		where $i \sim y$ denotes drawing index $i$ with probability $y_i$ (recall that $y \in \Delta(\mathcal{I})$ is a probability distribution). Hence, we can obtain an estimate $\hat{\nabla_v}$ of $\nabla_v g(v, y)$ by sampling $i \sim y$ and returning $\frac{1}{W_i - F_i(x)}\oracle{i}{grad}$. We are guaranteed $\norm{\hat{\nabla_v}}_\infty \leq \frac{c_{\text{grad}}}{W_i - F_i(x)} \leq \frac{c_{\text{grad}}}{\epsilon}$. We take a similar strategy for $\nabla_y g(v,y)$: $v^\top A = k \left(\frac{1}{k}v\right)\hat{A} = k\E_{j \sim \frac{1}{k}v}[v_jA_j]$ (since $\frac{1}{k}v_j$ is a probability distribution). Hence, we can sample $j \sim \frac{1}{k}v$ and return $\hat{\nabla}_y = k \cdot \text{diag}\left(\frac{1}{\vec{W} - \vec{F}(x)}\right)\oracle{j}{item}(x)$. This satisfies $\norm{\hat{\nabla}_y}_\infty \leq \frac{k}{\epsilon} c_{\text{item}}$.

% 		Recall that the true gradients required by mirror descent are given through the matrix $A$, of which column $i$ is $\nabla F_i(x)$. Suppose we have access to a random matrix $\hat{A}$ with $\E[\hat{A}] = A$. Then, we can produce $\oracle{}{grad}$ by sampling a group $i$ with probability proportional to the current $y$ returning $\hat{A}_{\cdot, i}$. This works since $y$ is guaranteed to lie in the probability simplex. For $\oracle{}{grad}$, we will normalize $x$ to obtain a probability distribution, a node $v$ from this distribution, and return $kx_vA_v$. This is an unbiased estimator of $x\hat{A}$ since $x\hat{A} = k \left(\frac{1}{k}x\right)\hat{A} = k\E_{v \sim \frac{1}{k}x}[x_vA_v]$ (since $\frac{1}{k}x$ is a probability distribution). Note that the resulting estimates satisfy $c_{\text{item}} \leq ||\hat{A}||_\infty$ and $c_{\text{grad}} \leq k ||A||_\infty$. 

% 		Note that the oracles $\mathcal{A}^i_{\text{grad}}$ and $\mathcal{A}^j_{\text{item}}$ provide exactly a stochastic oracle for $\nabla_x g(x,y)$ and $\nabla_y g(x,y)$ respectively. 
		
		Note that we can bound the diameter of $\polym$ with respect to the mirror map by $\sqrt{k \log n}$ (see \cite{hassani2017gradient}) and the diameter of $\Delta^m$ by $\sqrt{\log m}$ (see \cite{nemirovski2009robust}). We will run mirror descent for $T'$ iterations. Let $\bar{x} = \frac{1}{T'}\sum_{t = 1}^{T'} x^t$ and $\bar{y} = \frac{1}{T'}\sum_{t = 1}^{T'} y^t$. Now applying Proposition 3.2 of Nemirovski et al.\ \cite{nemirovski2009robust} implies that after $T'$ iterations we have
		
		\begin{align*}
		&\Pr\Bigg[\max_{v \in \polym} g(v, \bar{y}) - \min_{y \in \Delta(\mathcal{I})} g(\bar{v}, y) \geq \\
		& \frac{(8 + 2\Omega)\sqrt{5}\left(c_{\text{grad}}\sqrt{k \log n} + k c_{\text{item}}\sqrt{\log n}\right)}{\epsilon\sqrt{T}}\Bigg]  \leq 2\exp(-\Omega)
		\end{align*}
		
		and so taking $T' = O\left(\frac{\left(c_{\text{grad}}\sqrt{k \log n} +k c_{\text{item}}\sqrt{\log n}\right)^2 \log\frac{1}{\delta}}{\epsilon^4}\right)$ ensures that 
		
		\begin{align*} 
		\min_{y \in \Delta(\mathcal{I})} g(\bar{v}, y) \geq \max_{v \in \polym} \min_{y \in \Delta(\mathcal{I})} g(v,y) - \epsilon.
		\end{align*}
		
		holds with probability at least $1-\delta$.

%		We claim that a stochastic oracle for $\nabla_x g$ can be obtained given a stochastic oracle for a random entry of the matrix $A$. This oracle returns a tuple $(i, j, a)$ such that $\E[a] = A_ij$. To obtain a stochastic gradient estimate, we query the oracle once and return a vector $\tilde{\nabla} = \left(nm\cdot a y_i\right)1_j$ where $1_j$ is the indicator vector with a 1 in coordinate $j$ and zeros elsewhere. We have 
%		
%		\begin{align*}
%		\E[\tilde{\nabla}] &= \frac{1}{nm}\sum_{j=1}^n \sum_{i = 1}^n \E\left[\left(nm\cdot a y_i\right)1_j\right]\\
%		&= \sum_{j=1}^n \sum_{i = 1}^n y_i A_{ij} 1_j\\
%		&= Ay\\
%		&= \nabla_x g(x,y)
%		\end{align*}
%		
%		implying that this estimator is unbiased. Let $\norm{\tilde{\nabla}}_\infty$ denote the maximum value that $$

%		
%		\begin{align*}
%		\E\left[\min_{i = 1...m} \nabla F_i(x)\right] \cdot v \geq OPT - \left(R_XB_X + R_Yc_{\text{item}}\right)\sqrt{\frac{2}{t}}
%		\end{align*}
%		
%		We have $B_X = c_{\text{item}} = ||\nabla F_i(x)||_\infty \leq b$, $R_X = k \log n$ and $R_y = \log m$. Putting it all together, we need 
%		
%		\begin{align*}
%		t = \frac{4b^2 k \log\left(n + m\right)}{\epsilon^2}
%		\end{align*} 
%		
%		iterations. 
	\end{proof}

	\begin{theorem}
	Suppose that there exists some $x\in \polym$ satisfying $F_i(x) \geq W_i$ for all $i = 1...m$. Then, after $T = \frac{bk^2}{\epsilon}$ iterations, the algorithm returns a point $x^T$ satisfying $F_i(x^T) \geq \left(1 - \epsilon\right)\left(1 - \frac{1}{e}\right)W_i - \epsilon$ for all $i$. Each iteration requires one call to mirror descent at success probability $\delta' = \frac{\delta \epsilon}{bk^2}$ and precision level $\epsilon' = \frac{\epsilon}{2}$, $O(m)$ $\epsilon$-accurate value oracle calls, and $O(n)$ additional work. \label{theorem:fw}
	\end{theorem}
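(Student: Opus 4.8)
The plan is to run the continuous greedy (Frank--Wolfe) scheme on the multilinear extensions, starting from $x^0 = \mathbf{0}$ with step size $\gamma = 1/T$, and at each iteration $t$ to move in the direction $v^t \in \polym$ produced by S-SP-MD. First I would verify that the hypothesis of Lemma~\ref{lemma:mirror} holds at every iterate. Let $x^*$ be the promised feasible point with $F_i(x^*) \geq W_i$ for all $i$. By the standard continuous-greedy inequality for monotone submodular multilinear extensions --- concavity of $F_i$ along the nonnegative direction $x^* \vee x^t - x^t$ together with monotonicity --- we have $x^* \cdot \nabla F_i(x^t) \geq F_i(x^* \vee x^t) - F_i(x^t) \geq F_i(x^*) - F_i(x^t) \geq W_i - F_i(x^t)$. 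Thus $v^* = x^*$ witnesses the premise, and Lemma~\ref{lemma:mirror} (at precision $\epsilon' = \epsilon/2$) returns a $v^t$ satisfying $v^t \cdot \nabla F_i(x^t) \geq (1-\epsilon')(W_i - F_i(x^t)) - \epsilon'$ for all $i$, with failure probability $\delta'$.

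Next I would quantify the per-step progress. By the $b$-smoothness of each $F_i$ in the $\ell_1$ norm (Lemma~\ref{lemma:smooth}) and the descent inequality, the update $x^{t+1} = x^t + \gamma v^t$ gives
\begin{align*}
F_i(x^{t+1}) \geq F_i(x^t) + \gamma\, v^t \cdot \nabla F_i(x^t) - \tfrac{b\gamma^2}{2}||v^t||_1^2 .
\end{align*}
Since $v^t \in \polym$ we have $||v^t||_1 \leq k$, so combining with the mirror-descent guarantee and writing $D_t = W_i - F_i(x^t)$ yields the scalar recursion
\begin{align*}
D_{t+1} \leq \Bigl(1 - \tfrac{1-\epsilon'}{T}\Bigr) D_t + \tfrac{\epsilon'}{T} + \tfrac{bk^2}{2T^2} .
\end{align*}

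The core of the argument is to unroll this recursion. Using $D_0 = W_i$ (as $F_i(\mathbf{0}) = 0$) and $(1 - \tfrac{1-\epsilon'}{T})^T \leq e^{-(1-\epsilon')}$, the homogeneous part contracts $W_i$ by a factor of $e^{-(1-\epsilon')}$, while the geometric sum of the additive terms is $O(\epsilon)$; in particular the choice $T = bk^2/\epsilon$ makes the total accumulated smoothness error exactly $\epsilon/2$. This gives $D_T \leq e^{-(1-\epsilon')}W_i + O(\epsilon)$, hence $F_i(x^T) \geq (1 - e^{-(1-\epsilon')})W_i - O(\epsilon)$. Bounding $e^{-(1-\epsilon')} = e^{-1}e^{\epsilon'} \leq e^{-1}(1 + O(\epsilon))$ and folding the constants, after rescaling $\epsilon$ by a constant factor, collapses to the stated $F_i(x^T) \geq (1-\epsilon)(1 - \tfrac{1}{e})W_i - \epsilon$.

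Finally, for the complexity and probability bookkeeping: setting $\delta' = \delta\epsilon/(bk^2) = \delta/T$, a union bound over the $T$ iterations keeps the overall failure probability at $\delta$. Each iteration makes one S-SP-MD call at precision $\epsilon/2$, estimates each $F_i(x^t)$ with an $\epsilon$-accurate value oracle (the $O(m)$ calls needed to determine the active index set $\mathcal{I}$ and the scalings $W_i - F_i(x^t)$ fed into Lemma~\ref{lemma:mirror}), and performs $O(n)$ work for the coordinate-wise update. The main obstacle is the careful reconciliation of the three error sources --- the multiplicative-plus-additive error of Lemma~\ref{lemma:mirror}, the quadratic smoothness error per step, and the value-oracle estimation error in $F_i(x^t)$ --- so that they compound into only the claimed $(1-\epsilon)$ multiplicative and $\epsilon$ additive loss; this requires tuning $T$ and $\epsilon'$ so that $e^{-(1-\epsilon')}$ and the summed additive terms combine cleanly in the recursion above.
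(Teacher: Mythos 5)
Your proposal is correct and follows essentially the same route as the paper's proof: a Frank--Wolfe update with direction $v^t$ from S-SP-MD, the $b$-smoothness descent inequality with $\|v^t\|_1 \leq k$ giving the recursion $W_i - F_i(x^t) \leq \left(1 - \frac{1-\epsilon'}{T}\right)\left(W_i - F_i(x^{t-1})\right) + \frac{\epsilon'}{T} + \frac{bk^2}{2T^2}$, unrolling over $T = \frac{bk^2}{\epsilon}$ steps, and a union bound with per-call failure probability $\delta/T$. The only difference is cosmetic and in your favor: you explicitly verify the premise of Lemma~\ref{lemma:mirror} via the feasible point $x^*$ and concavity along nonnegative directions, a step the paper leaves implicit under its ``assuming the $W_i$ are feasible'' clause.
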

	\begin{proof}
	We analyze the progress that the algorithm makes with respect to each $F_i$ over a single step $t$. Using the guarantee for the subroutine mirror descent (run with a precision level $\epsilon_1$ to be set below), and assuming that the values $\{W_i\}$ are feasible, we have with probability at least $1-\delta$
	\begin{align*}
	&F_i(x^t) - F_i(x^{t-1}) \\
	 &\geq \frac{1}{T}\left[\nabla F_i(x^{t-1}) \cdot v^t\right] - \frac{b}{2}\norm{x^{t} - x^{t-1}}_1^2 \text{(Lemma \ref{lemma:smooth})}\\
	&\geq \frac{1}{T}\left[\nabla F_i(x^{t-1}) \cdot v^t\right] - \frac{bk^2}{2T^2} \text{($\ell_1$ diameter of $\polym$)}\\
	&\geq \frac{1}{T}\left((1 - \epsilon_1)(W_i  - F_i(x^{t-1}))  - \epsilon_1\right) - \frac{bk^2}{2T^2} \text{(Lemma \ref{lemma:mirror})}
	\end{align*}
	
	which implies
	
	\begin{align*}
	W_i  - F_i(x^t)\leq \left(1 - \frac{1 - \epsilon_1}{T}\right)\left[W_i - F_i(x^{t-1})\right] + \frac{\epsilon_1}{T} + \frac{bk^2}{2T^2}
	\end{align*}
	
	and so after $T$ steps
	
	\begin{align*}
	W_i  - F_i(x^T) &\leq \left(1 - \frac{1-\epsilon_1}{T}\right)^T\left[W_i - F_i(x^{0})\right] + \epsilon_1 + \frac{bk^2}{2T}\\
	&\leq \frac{1}{e^{1-\epsilon_1}}W_i + \epsilon_1 + \frac{bk^2}{2T}
	\end{align*}

	holds with probability at least $1-T\delta$ via union bound. Taking $\epsilon_1 = \frac{\epsilon}{2}$, $T = \frac{bk^2}{\epsilon}$, and running mirror descent with success probability $\frac{\delta}{T}$ at each iteration ensures that
	
	\begin{align*}
	F_i(x^T) &\geq \left(1 - \frac{1}{e^{1 - \epsilon}}\right) W_i - \epsilon\\
	&\geq (1 - \epsilon)\left(1 - \frac{1}{e}\right) W_i - \epsilon
	\end{align*}
	
	holds for all $i$ with probability at least $1-\delta$, which completes the guarantee for the solution quality. To obtain the bound on additional work done by the algorithm, we note that the only operation performed besides calling mirror descent is adding $v^t$ to the current iterate, which takes time $O(n)$. 
	\end{proof}

	\begin{theorem}
		Given a feasible set of target values $W_1...W_n$, Algorithm \ref{alg:multi-submodular} outputs a set $S$ such that $f_i(S) \geq (1-\epsilon)\left(1 - \frac{m}{k(1 +\epsilon')\epsilon^3}\right)\left(1 - \frac{1}{e}\right)W_i - \epsilon$ with probability at least $1-\delta$. Asymptotically as $k\to\infty$, the approximation ratio can be set to approach $1 - 1/e$ so long as $m= o(k \log^3 k)$. The algorithm requires $O(nm)$ $\epsilon'$-accurate value oracle calls, $O(m\frac{bk^2}{\epsilon}\log \frac{1}{\delta})$ $\epsilon$-accurate value oracle calls, $O\left(\frac{bk^4 c^2}{\epsilon^5}  \log\left(n+ \frac{bk}{\delta\epsilon}\right)\right)$ calls to $\oracle{}{grad}$ and $\oracle{}{item}$, and $O\left(\frac{nk^2b^2}{\epsilon^2} + \frac{mk^2b}{\epsilon} + \frac{k^3 b^2}{\epsilon^2}\right)$ additional work. 
	\end{theorem}

	\begin{proof}
		\textsc{ThresholdInclude} produces a set $S_1$ for which each item $j \in S_1$ satisfies $f_i(\{j\}) \geq W_i(1 + \epsilon')\epsilon^3$ for some $i$, and any $j \not\in S_1$ satisfies $f_i(\{j\}) \leq W_i\epsilon^3$ for all $i$. Note that there can be at most $\frac{1}{(1 + \epsilon')\epsilon^3}$ items with $f_i(\{j\}) \geq W_i(1 + \epsilon')\epsilon^3$ for any given $i$ (combining submodularity with our WLOG assumption that $f_i$ is upper bounded by $W_i$). Hence, $|S_1| \leq  \frac{m}{(1 + \epsilon')\epsilon^3}$. Define $k_1 = k - |S_1|$.
		
		Now we lower bound the marginal gain of the fractional vector $x$ returned by \textsc{MultiobjectiveFW}. So long as the target values $\{\frac{k_1}{k}\left(W_i - f_i(S_1)\right)\}$ are feasible, we are guaranteed that $F_i(x|S_1) \geq \frac{k_1}{k}\left(1 - \frac{1}{e}\right)\left(W_i - f_i(S_1)\right) - \epsilon$. for all $i$. To see feasibility, let $S^*$ be the promised set satisfying the overall feasibility problem (i.e., $f_i(S^*) \geq W_i$ for all $i$). Let $x_{S}$ denote the indicator vector of the set $S$. We have that $|S^*\setminus S_1| \leq k$, and $F_i(x_{S^*\setminus S_1} | x_{S_1}) = f_i(S^*|S_1) \geq W_i - f_i(S_1)$. Using Corollary 3 of \cite{udwani2018multi}, the point $x' = \frac{k_1}{k}x_{S^*\setminus S_1}$ satisfies $F_i(x'|x_{S_1}) \geq \frac{k_1}{k}(W_i - f_i(S_1))$. $x'$ is also feasible for the continuous problem since $||x'||_1 \leq k_1$. Now applying Theorem \ref{theorem:fw} guarantees that  $F_i(x|S_1) \geq \frac{k_1}{k}\left(1 - \frac{1}{e}\right)\left(W_i - f_i(S_1)\right) - \epsilon$ with probability at least $1-\delta$. 
		
		Lastly, we need to handle the rounding process. We first take the point $x$ and approximately decompose it into a convex combination of integral points of $\mathcal{P}$. This is done using the algorithm of Mirrokni et al.\ \cite{mirrokni2017tight}, which produces a point $x_{\text{int}}$ satisfying $||x_{\text{int}} - x||_1 \leq \epsilon$ along with a decomposition of $x_{\text{int}}$ into $O(\frac{k^2}{\epsilon^2})$ integral points of $\mathcal{P}$ (\cite{mirrokni2017tight}, Proposition 5.1). If we run this algorithm with precision level $\frac{\epsilon}{b}$, Lemma \ref{lemma:lipschitz} guarantees that $|F_i(x_{\text{int}}) - F_i(x)| \leq \epsilon$ for all $i$ and hence $F_i(x_{\text{int}}|S_1) \geq \frac{k_1}{k}\left(1 - \frac{1}{e}\right)\left(W_i - f_i(S_1)\right) - 2\epsilon$. Applying Lemma 2 of \cite{udwani2018multi} (who summarize the guarantee for swap rounding proved by \cite{chekuri2010dependent}), carrying out $O\left(\log \frac{1}{\delta}\right)$ iterations of swap rounding and taking the best outcome produces a set $S_2$ which satisfies $f(S_2|S_1) \geq (1 - \epsilon)\frac{k_1}{k}\left(1 - \frac{1}{e}\right)\left(W_i - f_i(S_1)\right) - 3\epsilon$ with probability at least $1-\delta$, provided that the best outcome is determined by calling a value oracle with precision level $\epsilon$. Adding up the final guarantee, we have 
		
		\begin{align*}
		f(S) &= f(S_1 \cup S_2) \\
		&= f(S_1) + f(S_2|S_1) \\
		&\geq (1 - \epsilon)\frac{k_1}{k}\left(1 - \frac{1}{e}\right)W_1 - 3\epsilon\\
		&\geq (1 - \epsilon)\left(1 - \frac{m}{k\epsilon^3 (1+ \epsilon')}\right)\left(1 - \frac{1}{e}\right)W_1 - 2\epsilon
		\end{align*}
		
		and now rescaling $\epsilon$ by a factor $\frac{1}{3}$ gives the final approximation guarantee. The asymptotic $1-1/e$ approximation follows by setting $\epsilon$ as in \cite{udwani2018multi}.  
		
		We now add up the final runtime. The first thresholding step requires $n$ value oracle calls to each of the $m$ objectives at precision level $\epsilon'$. \textsc{MultiobjectiveFW} requires $\frac{bk^2}{\epsilon}$ iterations, each of which calls mirror descent once. Each invocation of mirror descent requires a total of $O\left(\frac{1}{\epsilon^4} \left(c_{\text{grad}}\sqrt{k \log n} + c_{\text{item}} k \sqrt{\log n}\right)^2 \log \frac{bk}{\delta\epsilon}\right)$ oracle calls. Recalling that $c = \max\{c_{\text{item}}, c_{\text{grad}}\}$, this is upper bounded by $O\left(\frac{c^2 k^2}{\epsilon^4}  \log \left(n + \frac{bk}{\delta\epsilon}\right)\right)$. Each iteration of \textsc{MultiobjectiveFW} also uses  $m$ value oracle calls at precision level $\epsilon$. Finally, each iteration uses additional $O(n+m)$ overhead, for a total of $O\left(\frac{(n + m) k^2 b}{\epsilon}\right)$. In the rounding procedure, we first need to involve \textsc{ApproximateCaratheodory} with precision level $\frac{\epsilon}{b}$, which per Proposition 5.1 of \cite{mirrokni2017tight} requires $\frac{k^2b^2}{\epsilon}$ iterations, and one linear maximization over $\mathcal{P}$ per iteration. Since $\mathcal{P}$ is the uniform matroid polytope, each linear maximization takes time $O(n)$, and so this stage contributes time $O\left(\frac{nk^2b^2}{\epsilon}\right)$. Lastly, we have the $O\left(\log \frac{1}{\delta}\right)$ iterations of swap rounding. Since $x_{\text{int}}$ was decomposed into $\frac{k^2b^2}{\epsilon^2}$ integral points, swap rounding takes time $\frac{k^3b^2}{\epsilon^2}$ for each iteration \cite{chekuri2010dependent}. We also need one $\epsilon$-accurate value oracle call to each of the objective functions per iteration so that we can select the (approximately) best set. Combining these bounds results in the final stated runtime. 
	\end{proof}

	\section{Efficient stochastic gradient estimates}
	
	We now give efficient implementations for the oracles $\oracle{}{grad}$ and $\oracle{}{item}$. They run in combined time $O\left(k\left(|V| + |E|\right) \log^2\frac{|V|}{\delta}\right)$ time, where the operation succeeds with probability $1-\delta$. Our implementations guarantee $c \leq 2b$ whenever they succeed. 
	
	We use a representation the influence maximization objective as the expectation over a set of deterministic submodular functions. Specifically, we can view the independent cascade model as specifying a distribution over \emph{live-edge} graphs \cite{kempe2003maximizing} where each edge is present with probability $p$ and absent otherwise (where all events are independent). Let $\xi$ denote a graph realized from this process, which we will denote $\xi \sim P$. For a fixed $\xi$, the influence spread of a given seed set $S$ is just the number of nodes which are reachable from $S$ via only the edges present in $\xi$. We will denote this quantity by $f(S, \xi)$, where $f(S) = \E_{\xi \sim P}[f(S, \xi)]$.  
% 	In order to apply Algorithm \ref{alg:multi-submodular} to influence maximization, the main requirement is to produce efficient instantiations of the oracles $\oracle{}{grad}$ and $\oracle{}{item}$. We show how to simultaneously produce both oracles through a single process. 
	
% 	We now show how to obtain an estimate $\hat{A}$ with $\E[\hat{A}] = A$ in $O\left(km\left(|V| + |E|\right) \log^2\frac{|V|m}{\delta}\right)$ time, where the operation succeeds with probability $1-\delta$. Whenever it succeeds, the resulting estimate satisfies $||\hat{A}||_\infty \leq 2b$. 
	
	The starting point is to recall that for any group's utility function $f_i$, the gradients of the multilinear extension $F_i$ satisfy
	
	\begin{align}
	\nabla_{x_j} F_i &= \E_{S \sim x}[f_i(S \cup \{j\}) - f_i(S \setminus \{j\})]\nonumber\\
	&= \E_{S \sim x, \xi \sim P}[f_i(S \cup \{j\}, \xi) - f_i(S \setminus \{j\}, \xi)],\label{eq:gradient-diff}
	\end{align}
	
	which follows from the definition of the multilinear extension \cite{chekuri2010dependent}. Note that for any fixed $i$ and $x_j$, we can obtain a stochastic estimate of this quantity in time $O(|V| + |E|)$ by first drawing a set $S \sim x$, simulating the cascade process, and counting the number of of nodes reached with and without item $j$. By submodularity, the resulting estimate satisfies $f(S \cup \{j\}, \xi) - f(S \setminus \{j\}, \xi) \leq b$ for any $S$ and $\xi$. Naively repeating this process over all $i,j$ would hence require time $O(|V|(|V| + |E|)m)$. We now show how to implement the required oracles by drawing a number of samples that scales only with $k \log |V|$ instead of $|V|$. 
	
	Implementing $\oracle{}{item}$ is simpler because we only need to estimate  $\left[\nabla_{x_j} F_1(x) ... \nabla_{x_j} F_m(x)\right]$ for a single fixed $x_j$. Hence, we can draw a single $S, \xi$, count the number of nodes reachable in each group under $\xi$ with set $S \setminus \{j\}$, and then count the number of nodes reachable with set $S \cup \{j\}$. This takes time $O\left(|V| + |E|\right)$. 
	
	Efficiently implementing $\oracle{}{grad}$ is more difficult since we need to simultaneously estimate $\nabla F_i$ with respect to every $x_j$; hence, naive enumeration would take $O(|V|^2)$ time. We now detail our strategy.	We start by considering a given sample $(S, \xi)$ and show how to estimate the marginal contribution $f_i(S \cup \{j\}, \xi) - f_i(S, \xi)$ for a given $i$ and and \emph{all} $j \not\in S$ in total runtime $O\left(\left(|V| + |E|\right) \log\frac{|V|}{\delta}\right)$. We first remove all nodes from $G$ that are reachable from $S$ under $\xi$, which takes time $O\left(|V| + |E|\right)$. Any node removed in this stage has marginal contribution 0. Next, we remove all nodes that are isolated in the remaining subgraph and assign them marginal contribution 1 if they are part of group $i$. This stage takes time $O(|V|)$.

%	Suppose that $x$ satisfies $\epsilon \leq x_j \leq 1 - \epsilon$ (a point we will return to below). Then, if we draw at least $\frac{1}{\epsilon} \log\frac{2n}{\delta}$ pairs $\{(S_\ell, \xi_\ell)\}$ according to $x$, it is easy to see via union bound that for every item $j$ there will be at least one set $S$ that contains $j$ and one set that does not with total probability at least $1-\delta$. For each draw $(S, \xi)$, we count the number of nodes in each group that are reached (which takes time $O(|V| + |E|)$. Then, over all pairs $(S, \xi)$ and $(S', \xi')$, we compute the difference in the number of nodes in each group reached under $(S, \xi)$. This take time $O(m)$ per pair, and there are $O\left(\frac{1}{\epsilon^2}\log^2\left(\frac{n}{\delta}\right)\right)$ pairs, so this step contributes $O\left(\frac{m}{\epsilon^2}\log^2\left(\frac{n}{\delta}\right)\right)$ total runtime. Lastly, we form row $j$ of our estimate $\hat{A}$ (that is, our estimate of $(\nabla_{x_j} F_1(x)...\nabla_{x_j} F_m(x))$) by identifying a pair of samples which both include and exclude $j$, and letting row $j$ be the corresponding difference in objective values. Over all $j$, this contributes runtime $O\left(\frac{|V|}{\epsilon} \log\frac{n}{\delta}\right)$. Hence, we obtain final runtime $O\left(\frac{|V| + |E|}{\epsilon} \log\frac{2n}{\delta} +  \frac{m}{\epsilon^2}\log^2\left(\frac{n}{\delta}\right) \right)$. 
	
%	Recall that this estimate required $x$ to be bounded in the interval $[\epsilon, 1-\epsilon]$. Note that the algorithm 
	
	Now we deal with the remaining nodes. Here, determining their marginal contribution of node $v$ to group $i$ amounts to estimating the number of nodes of group $i$ which are reachable from $v$ in $\xi$. We use the size estimation framework of Cohen \cite{cohen1997size}, which allows us to simultaneously produce an unbiased estimate of every remaining node's contribution to group $i$ in time $O\left(|E|\right)$. We apply the weighted version of the estimator, where every node in group $i$ has weight 1 and all other nodes have weight 0. We take  $O(\left(\log \frac{|V|}{\delta}\right)$ independent repetitions of the estimation process, resulting in $O\left(|E|\log\frac{|V|}{\delta}\right)$ runtime.  For a given group $i$, and using $\ell$ repetitions, Cohen's estimator produces an estimate $\Delta(v)$ for each node which satisfies
	
	\begin{enumerate}
	    \item $\E[\Delta(V)] = f_i(\{v\}|S)$
	    \item $\Pr\left[|\Delta(v) - f_i(\{v\}|S)| \geq \epsilon f_i(\{v\}|S)\right] \leq e^{-\Omega\left(\epsilon^2 \ell\right)}$ for any $0 \leq \epsilon \leq 1$
	\end{enumerate}
	
	We fix $\epsilon = 1$ as an arbitrary constant and use $\ell = O\left(\log \frac{|V|}{\delta}\right)$. This allows us to use union bound combined with the second property of the estimator to argue that over all nodes combined
	
	\begin{align*}
	    \Pr\left[\Delta(v) \geq  2b\right] \leq \Pr\left[\Delta(v) \geq  2f_i(\{v\}|S)\right] \leq \delta
	\end{align*}
	
	and so the resulting gradients will satisfy our stated bounds on $c_{\text{item}}$ and $c_{\text{grad}}$ with high probability.

% 	Now we deal with the remaining nodes, all of which have at least one edge. We first use Tarjan's algorithm to produce a directed acyclic graph consisting of the strongly connected components of the remaining subgraph, which also takes time $O\left(|V| + |E|\right)$. A node's marginal contribution towards a given $f_i$ is the number of nodes in its component belonging to group $i$, plus the number of such nodes in all descendant components of the DAG. We compute the set of all nodes reachable from each component via dynamic programming on the DAG. Let the set of nodes reachable from component $\ell$ be $R_\ell$. To compute a given $R_\ell$, it suffices to take the the union over the $R$ corresponding to each of the child components. Note that $|R_\ell| \leq b$ always holds via the definition of $b$ combined with submodularity. Hence, taking the union of any two children takes time $O(b)$. We perform one union for every edge in the DAG, of which there are at most $|E|$. Hence, this stage takes time $O\left(|E|b\right)$. As we perform each union, we also update the count of the number of nodes in each group which are reachable (which only adds an additional constant factor of work). Summing up, the total runtime to simultaneously compute all marginal contributions of singleton nodes is $O\left(|V| + |E|b\right)$. 
	
	Our overall strategy is to generate enough samples that every node is missing from $S$ in at least one of them. Then, we can use a node's marginal contribution in the sample from which it missing as its gradient estimate. Note that a node $j$ is absent from any given sample with probability $1 - x_j$. Given budget $k$, at most $\frac{k}{1 - \frac{1}{k+1}} = k+1$ nodes can have $x_j \geq 1 - \frac{1}{k+1}$. For any such node, we can explicitly estimate a sample of Equation \ref{eq:gradient-diff} using $O\left(|V| + |E|\right)$ time per node, for $O\left(k\left(|V| + |E|\right)\right)$ total. For the remaining nodes, a simple argument shows that taking $(k+1) \log\frac{|V|}{\delta}$ samples is sufficient to ensure that each node is missing from at least one sample with combined probability $1-\delta$. Summing up, the total runtime to implement $\oracle{}{grad}$ is $O\left(k\left(|V| + |E|\right) \log^2\frac{|V|}{\delta}\right)$. 
	
%	Since each coordinate of $x$ increases by at most $\frac{1}{T}$ per iteration, we always have $1 - x_i \geq \frac{1}{T}$. Hence, if we take $T \log \frac{n}{\delta}$ samples, every node will be missing from at least one sample with simultaneous probability $1-\delta$ by union bound.
	
	\section{Runtime comparison with previous work}
	
	The best previous algorithm for multiobjective submodular maximization \cite{udwani2018multi} uses the same overall framework as us, but uses a MWU algorithm for the second stage (the continuous maximization problem). The MWU algorithm runs $O\left(\frac{m}{\epsilon^2}\right)$ iterations, where each iteration requires a call to a greedy algorithm that maximizes a weighted combination of the $f_i$. Using the best implementation of the greedy algorithm \cite{badanidiyuru2014fast}\footnote{While there are efficient special-purpose techniques for influence maximization on a given graph, it is not obvious how to adapt them to deal with the weighted combination of group objectives.}  requires $O\left(\frac{n}{\epsilon}\log \frac{n}{\epsilon}\right)$ value oracle calls, for $O\left(\frac{n}{\epsilon^3} \log m \log \frac{n}{\epsilon}\right)$ such calls in total. By comparison, our algorithm accesses the function through calls to the gradient oracles $\oracle{}{item}$ and $\oracle{}{grad}$. It makes a number of calls to these oracles which is only logarithmic in $n$, scaling as $O\left(\frac{bc^2k^4}{\epsilon^3}  \log \left(n + \frac{bk}{\delta\epsilon}\right)\right)$. Since gradient oracle calls can typically be implemented in similar asymptotic runtime to value oracle calls for common classes of functions (as we have demonstrated for influence maximization), our algorithm effectively saves a factor $O(n)$ runtime in exchange for worse dependence on $k$ and $b$. Since we expect $n$ to grow much faster than $k$ or $b$ (in many typical applications, $b$ is a small constant \cite{hassani2017gradient}), this is often an improvement in asymptotic runtime. For influence maximization in particular, it is easy to see that a value oracle call for a given group cannot be implemented in less than $O(|V| + |E|)$ time, which matches (up to log factors) our stochastic gradient oracle's dependence on the graph size.

\end{document}